      \setlist[enumerate]{label=(\roman*),font=\upshape}
    \newtheoremstyle{spacedplain}
      {\parskip}{0pt}{\itshape}{}{\bfseries}{.}{5pt plus 1pt minus 1pt}{}
    \newtheoremstyle{spaceddefinition}
      {\parskip}{0pt}{\normalfont}{}{\bfseries}{.}{5pt plus 1pt minus 1pt}{}
    \newtheoremstyle{spacedremark}
      {\parskip}{0pt}{\normalfont}{}{\itshape}{.}{5pt plus 1pt minus 1pt}{}
    \theoremstyle{spacedplain}
    \newtheorem{theorem}{Theorem}[section]
    \newtheorem{lemma}[theorem]{Lemma}
    \newtheorem*{theorem*}{Theorem}
    \newtheorem*{corollary*}{Corollary}
    \newtheorem*{lemma*}{Lemma}
    \newtheorem*{claim*}{Claim}
    \theoremstyle{spaceddefinition}
    \newtheorem{definition}[theorem]{Definition}
    \newtheorem*{definition*}{Definition}
    \newtheorem*{example*}{Example}
    \newtheorem*{algorithm*}{Algorithm}
    \theoremstyle{spacedremark}
    \newtheorem{remark}[theorem]{Remark}
    \newtheorem{fact}[theorem]{Fact}
    \newtheorem*{remark*}{Remark}
    \newtheorem*{question*}{Question}
    \newtheorem*{fact*}{Fact}
    \newcommand{\framedheader}[3]{
      \framebox[\textwidth]{
        \vbox{
          \vspace{2mm}
          \hbox to \textwidth {\hspace{1em}\today, #1 \hfill #2\hspace{1em}}
          \vspace{4mm}
          \hbox to \textwidth {\hfill \Large{#3} \hfill}
          \vspace{2mm}
        }
      }
      \vspace*{4mm}
    }
    \let\emptyset\varnothing
    \newcommand{\set}[1]{\left\{#1\right\}}
    \newcommand{\smid}{\,\middle|\,}
    \newcommand{\NN}{\mathbb{N}}
    \newcommand{\RR}{\mathbb{R}}
    \newcommand{\by}{\times}
    \renewcommand{\restriction}{\mathord{\upharpoonright}}
    \DeclareMathOperator{\spann}{span}
    \DeclareMathOperator{\diag}{diag}
    \newcommand{\mc}{\mathcal}
    \DeclareMathOperator{\supp}{supp}
    \def\XXint#1#2#3{{\setbox0=\hbaox{$#1{#2#3}{\int}$ }
    \vcenter{\hbox{$#2#3$ }}\kern-.6\wd0}}
\newcommand{\psiquantity}{Dirichlet content}
\newcommand{\psitwoquantity}{Neumann content}
\title{Hardy-Muckenhoupt Bounds for Laplacian Eigenvalues}
\author{
Gary L.\ Miller\thanks{
	Work partially supported by NSF CCF-1637523 AitF.}\\
	CMU\\
	\texttt{glmiller@cs.cmu.edu}\\
\and
Noel J.\ Walkington\\
	CMU\\
	\texttt{noelw@cmu.edu}\\
\and
Alex L.\ Wang\thanks{
	Work partially supported by NSF CCF-1637523 AitF.}\\
	CMU\\
	\texttt{alw1@andrew.cmu.edu}
}
\begin{document}
\maketitle
\begin{abstract}

We present two graph quantities $\Psi(G,S)$ and $\Psi_2(G)$ which give constant factor estimates to the Dirichlet and Neumann eigenvalues, $\lambda(G,S)$ and $\lambda_2(G)$, respectively. Our techniques make use of a discrete Hardy-type inequality.

\end{abstract}
\section{Introduction}
Let $G = (V,E,\mu,\kappa)$ be a vertex and edge
weighted undirected connected graph, i.e. $(V,E)$ forms a connected graph and $\kappa$, $\mu$ are positive weight functions on the edges and vertices respectively. We will think of our graphs as spring mass systems where vertex $v$ has mass $\mu(v)$ and edge $e$ has spring constant $\kappa(e)$.
Let $A$ be the weighted adjacency matrix, let $D$ be the weighted degree matrix, and let $L = D-A$ be the Laplacian matrix.
Let $M$ be the diagonal mass matrix.
Then, the generalized eigenvalues of $L$ with respect to $M$ have a nice interpretation. Specifically, solutions of the generalized eigenvalue problem
\begin{align*}
Lx = \lambda Mx
\end{align*}
correspond to modes of vibration of the spring mass system. When the spring mass system is connected, $\lambda_2$ is the fundamental mode of vibration\footnote{
The quantity $\lambda_2$ is referred to in the literature under various names: the algebraic connectivity, the Fiedler value, the fundamental eigenvalue, etc. In this paper we will refer to $\lambda_2$ as the Neumann eigenvalue to emphasize the boundary assumptions and to parallel our development in the Dirichlet case.
}.
For an introduction to spring mass systems and the Laplacian, see chapter 5 of \cite{strang2006linear}.

The following result, known as Cheeger's inequality, can be traced back to \cite{alon1985lambda1,cheeger1969lower,dodziuk1984difference}. Define the isoperimetric quantity\footnote{
The quantity $\Phi$ is often
referred to as the conductance of the graph or the Cheeger constant.  In this paper we will refer to $\Phi$ as the isoperimetric constant and reserve the term conductance for the conductance of an edge.}
of $G$ to be
\begin{align*}
\Phi(G) = \min_A\set{\frac{\sum_{e\in E(A,\bar A)} c_e}{\min(\mu(A),\mu(\bar A))}\smid A,\bar A\neq \emptyset}.
\end{align*}
Then we can bound $\lambda_2$ by
\begin{align*}
\frac{\lambda_2}{2}\leq \Phi \leq \sqrt{2\lambda_2 \max_i \frac{d_i}{\mu_i}}.
\end{align*}

In this paper, we introduce the \psitwoquantity of a graph. Roughly, the \psitwoquantity, $\Psi_2(G)$, is the minimum ratio over subsets $A,B\subseteq V$ of the conductance between $A$ and $B$ and the minimum mass of either set. Thus, noting that $\sum_{e\in E(A,\bar A)} c_e$ is the conductance between $A$ and $\bar A$, the isoperimetric constant is roughly equal to the \psitwoquantity\ where the minimization is restricted to sets $A,\bar A$. We will show how to use $\Psi_2(G)$ to give a constant factor estimate of $\lambda_2$. Along the way we will also define the \psiquantity, $\Psi(G,S)$, which allows us to estimate the Dirichlet eigenvalue.
In particular, we prove the following theorems.

\begin{theorem*}
Let $G$ be a vertex and edge weighted connected graph with boundary set $S$, a proper nonempty subset of $V$.
Let $\lambda(G,S)$ be the Dirichlet eigenvalue and let $\Psi(G,S)$ be the Hardy quantity of $G$.
Then
\begin{align*}
\frac{\Psi}{4}\leq\lambda \leq \Psi.
\end{align*}
\end{theorem*}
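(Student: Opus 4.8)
The plan is to read $\lambda(G,S)$ and $\Psi(G,S)$ as the two sides of a discrete Hardy inequality and then invoke the Muckenhoupt characterization of its best constant, which is exactly what supplies the factor $4$. By the Courant--Fischer characterization,
\[
\lambda(G,S) = \min\set{\frac{\sum_{e=(u,v)}\kappa_e (x_u-x_v)^2}{\sum_v \mu_v x_v^2} \smid x\restriction_S = 0,\ x\neq 0},
\]
so $1/\lambda$ is the best constant $C$ in the Poincar\'e/Hardy-type inequality $\sum_v \mu_v x_v^2 \le C\sum_{e=(u,v)} \kappa_e(x_u-x_v)^2$ taken over all $x$ vanishing on $S$. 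Given the ``Hardy--Muckenhoupt'' label and the factor $4$, the quantity $\Psi$ must be (the reciprocal of) a Muckenhoupt-type quantity; writing $B := 1/\Psi$, I expect $B = \max_A R_{\mathrm{eff}}(S,A)\,\mu(A)$, the maximum over sets $A$ disjoint from $S$ of the effective resistance from $S$ to $A$ times the downstream mass. With this reformulation the two claimed inequalities $\Psi/4 \le \lambda \le \Psi$ collapse to the single chain $B \le C \le 4B$; the left inequality is the ``easy'' test-function direction and the right inequality is the discrete Hardy--Muckenhoupt inequality.

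For the easy direction $B \le C$ (equivalently $\lambda \le \Psi$) I would exhibit one near-optimal test function. Fix an optimal $A$ with $R = R_{\mathrm{eff}}(S,A)$ and let $x^\star$ be the equilibrium potential of a unit electrical flow from $A$ to $S$, so that $x^\star\restriction_S \equiv 0$, $x^\star\restriction_A \equiv R$, and the Dirichlet energy telescopes to $\sum_{e=(u,v)} \kappa_e(x^\star_u - x^\star_v)^2 = R$. Since $\sum_v \mu_v (x^\star_v)^2 \ge R^2\mu(A)$, the Rayleigh quotient of $x^\star$ is at most $1/(R\,\mu(A)) = 1/B = \Psi$, whence $\lambda \le \Psi$ by minimality. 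This is the discrete analogue of the standard Muckenhoupt lower-bound test function.

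The crux is the reverse inequality $C \le 4B$, i.e.\ $\Psi \le 4\lambda$: one must bound $\sum_v \mu_v x_v^2 \le 4B\sum_{e=(u,v)} \kappa_e(x_u-x_v)^2$ for \emph{every} $x$ with $x\restriction_S=0$. On a path or tree rooted at $S$ this is the classical discrete Hardy--Muckenhoupt inequality: express $x_v$ as the telescoping sum of edge-gradients along the path from $S$ to $v$, apply Cauchy--Schwarz against the edge resistances weighted by the Muckenhoupt factor, and resum by Abel summation so that each edge's gradient energy is charged by the product of the downstream mass and the upstream resistance --- precisely $B$ --- with the multiplicative loss of $4$ emerging from the Muckenhoupt weighting. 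The main obstacle is that this argument is intrinsically one-dimensional: it needs a linear ``distance from the boundary'' that a general graph lacks. I expect the decisive step to be the reduction to this one-dimensional situation --- for instance by using the effective-resistance potential from $S$ to linearly order and layer the vertices, or by path-decomposing an optimal unit flow realizing $R_{\mathrm{eff}}(S,\cdot)$ and applying the one-dimensional inequality path-by-path with the flow as weights --- while controlling the loss so that the constant stays exactly $4$ rather than degrading. Getting the telescoping and Cauchy--Schwarz bookkeeping to close at the sharp Muckenhoupt constant is where the real work lies.
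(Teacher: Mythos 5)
Your reformulation ($1/\lambda$ as the best Hardy constant $C$, $1/\Psi$ as the Muckenhoupt quantity $B=\max_A R(S,A)\,\mu(A)$), your test-function argument for $\lambda\le\Psi$, and your outline of the one-dimensional case all match the paper, which likewise proves the path case by the classical telescoping/Cauchy--Schwarz/Abel-summation argument of Muckenhoupt. However, the step you explicitly defer --- the reduction of a general graph to the one-dimensional situation --- is precisely the new content of the paper's proof, and neither mechanism you suggest closes it. Layering the vertices by the level sets of some potential and collapsing each layer produces a quotient path whose test functions correspond to functions on $G$ that are constant on each layer; minimizing the Rayleigh quotient over this smaller class gives a quotient eigenvalue that is \emph{at least} $\lambda$, which is the wrong direction for a lower bound on $\lambda$. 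To use the path theorem you need the quotient's Dirichlet eigenvalue to be at most (in fact equal to) $\lambda$, and layering by the effective-resistance potential from $S$ gives no control over this; the flow-decomposition alternative is likewise undeveloped, with no evident way to apportion the vertex masses among the flow paths while retaining the constant $4$.

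The paper's resolution is to layer by the level sets of the optimal Dirichlet eigenvector $x$ itself. One first splits every edge spanning several levels by inserting zero-mass vertices, with conductances scaled so that energies of linear extensions are preserved (this leaves $\lambda$ unchanged); then, in the minimum-energy extension of $x$, every edge joins vertices in the same or in adjacent levels. Since the eigenvector is by construction constant on its own level sets, it descends to the quotient path, so the quotient's Dirichlet eigenvalue equals $\lambda$ exactly, with no loss. The path theorem then yields $\lambda \ge \tfrac{1}{4}\bigl(\max_k \mu'(\tilde A_k)\sum_{i\le k}1/\tilde\kappa_i\bigr)^{-1}$, and this is transferred back to $G$ via Rayleigh monotonicity (shorting each level set only decreases effective resistance, so $\sum_{i\le k}1/\tilde\kappa_i \le R'(S,\tilde A_k)$) together with the facts that the inserted vertices carry zero mass and that $R(S,A'\cap V)\ge R'(S,A')$. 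Without this specific choice of layering function --- or some genuine substitute --- the chain $C\le 4B$ does not go through for general graphs, so your proposal has a real gap at its acknowledged crux.
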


\begin{theorem*}
Let $G$ be a vertex and edge weighted connected graph. Let $\lambda_2(G)$ be the Neumann eigenvalue and let $\Psi_2(G)$ be the \psitwoquantity\ of $G$. Then,
\begin{align*}
\frac{\Psi_2}{4}\leq\lambda_2 \leq \Psi_2.
\end{align*}
\end{theorem*}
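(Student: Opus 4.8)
The plan is to prove the two inequalities separately: the upper bound $\lambda_2 \le \Psi_2$ by exhibiting a single good test vector, and the lower bound $\Psi_2 \le 4\lambda_2$ by reducing to the Dirichlet theorem already established above. Throughout I will use the Rayleigh characterization
\[
\lambda_2 = \min_{x \not\equiv \mathrm{const}} \frac{\mathcal E(x)}{\Var_\mu(x)}, \qquad \mathcal E(x) := \sum_{e=(u,v)} \kappa(e)\,(x(u)-x(v))^2 = x^\top L x,
\]
where $\Var_\mu(x) = \sum_v \mu(v)\,(x(v)-\bar x)^2$ with $\bar x = \mu(V)^{-1}\sum_v \mu(v)\,x(v)$, so that the minimizer is the Fiedler vector and $\Var_\mu(x) = x^\top M x$ once $x$ is $M$-orthogonal to the all-ones vector $\mb 1$.

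For the upper bound, let $A,B$ be the disjoint pair achieving $\Psi_2$, say with $\mu(A) \le \mu(B)$, and let $g$ be the equilibrium potential between them, i.e.\ the energy minimizer with $g|_A = 1$ and $g|_B = 0$. Then $\mathcal E(g)$ equals the conductance between $A$ and $B$ appearing in the numerator of $\Psi_2$, and $0 \le g \le 1$ by the maximum principle. Feeding $g$ into the Rayleigh quotient gives $\lambda_2 \le \mathcal E(g)/\Var_\mu(g)$, so the whole bound reduces to the elementary variance estimate obtained by discarding all vertices outside $A\cup B$ and minimizing $\mu(A)(1-c)^2+\mu(B)c^2$ over the centering constant $c$, namely $\Var_\mu(g) \ge \frac{\mu(A)\mu(B)}{\mu(A)+\mu(B)}$. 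This is exactly the mass term in the denominator of $\Psi_2$ (and is comparable to $\min(\mu(A),\mu(B))$ up to a factor of two), so it delivers $\lambda_2 \le \Psi_2$.

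For the lower bound I reduce to the Dirichlet case. Let $x$ be the Fiedler vector and let $m$ be a weighted median, i.e.\ a value with $\mu(\{x > m\}) \le \mu(V)/2$ and $\mu(\{x < m\}) \le \mu(V)/2$. Replacing $x$ by $y := x - m$ leaves $\mathcal E$ unchanged (it annihilates constants) while increasing the $M$-norm, since $\sum_v \mu(v)\,x(v)=0$ forces $y^\top M y = x^\top M x + m^2\mu(V)$; hence $\mathcal E(y)/(y^\top M y) \le \lambda_2$. Splitting $y = y_+ - y_-$ into positive and negative parts, the edgewise inequality $(y(u)-y(v))^2 \ge (y_+(u)-y_+(v))^2+(y_-(u)-y_-(v))^2$ gives $\mathcal E(y) \ge \mathcal E(y_+)+\mathcal E(y_-)$, while disjoint supports give $y^\top M y = \norm{y_+}_M^2 + \norm{y_-}_M^2$. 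By the mediant inequality one part, say $y_+$, satisfies $\mathcal E(y_+)/\norm{y_+}_M^2 \le \lambda_2$. Now $y_+$ is supported on $A_+ := \{x>m\}$ and vanishes on $S := V\setminus A_+$, which by the median property has $\mu(S) \ge \mu(V)/2 \ge \mu(A_+)$; thus $y_+$ is a feasible Dirichlet test function for boundary $S$, so $\lambda(G,S) \le \lambda_2$. The Dirichlet theorem then yields $\Psi(G,S) \le 4\lambda(G,S) \le 4\lambda_2$, and finally $\Psi_2 \le \Psi(G,S)$ because the set realizing $\Psi(G,S)$ lies in $A_+$ and so has mass at most $\mu(S)$, making it an admissible pair against $S$ in the definition of $\Psi_2$ with matched minimal mass.

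The delicate point — and the main obstacle — is the last comparison $\Psi_2 \le \Psi(G,S)$ together with the balanced-boundary bookkeeping. The \emph{median} shift (rather than the mean) is precisely what guarantees that both $A_+$ and $S$ have mass at most $\mu(V)/2$, which forces $\min(\mu(A_+),\mu(S)) = \mu(A_+)$ and lets the Dirichlet content dominate the Neumann content with \emph{no} extra constant, so that the factor $4$ transfers intact. I must also confirm, from the matched definitions of $\Psi$ and $\Psi_2$, that the Hardy--Muckenhoupt quantity $\Psi(G,S)$ really does bound the admissible-pair ratio entering $\Psi_2$; any slack there is the only thing that could degrade the clean constant. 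By contrast, the energy-splitting inequality and the equilibrium-potential variance estimate are routine.
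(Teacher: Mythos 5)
Your upper bound is correct and is actually more direct than the paper's: plugging the equilibrium potential of the optimal pair $(A,B)$ into the Rayleigh quotient and lower-bounding the variance by $\mu(A)\mu(B)/(\mu(A)+\mu(B))$ yields exactly $\lambda_2\le\Psi_2$, whereas the paper routes even this direction through its pinching lemma (Lemma \ref{lem:graph_pinching}) and the Dirichlet theorem (Theorem \ref{thm:dir_graph}).

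The lower bound, however, has a genuine gap, and it sits exactly at the step you flagged as delicate: the comparison $\Psi_2\le\Psi(G,S)$ is false in general. The numerator in the paper's definition of $\Psi_2$ is the \emph{sum} $\mu(A)^{-1}+\mu(B)^{-1}$, not $\max\left(\mu(A)^{-1},\mu(B)^{-1}\right)=\min(\mu(A),\mu(B))^{-1}$; your phrase ``matched minimal mass'' would only close the gap under the isoperimetric-style min-mass normalization, which is not the definition in play (and which you yourself used correctly, in sum form, for the upper bound). With the actual definition, pairing the optimal Dirichlet set $A^*\subseteq V\setminus S$ against $S$ gives only
\begin{align*}
\Psi_2 \;\le\; \frac{\mu(A^*)^{-1}+\mu(S)^{-1}}{R(A^*,S)} \;=\; \Psi(G,S)\left(1+\frac{\mu(A^*)}{\mu(S)}\right) \;\le\; 2\,\Psi(G,S),
\end{align*}
where the last inequality uses your median property $\mu(A^*)\le\mu(A_+)\le\mu(V)/2\le\mu(S)$. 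So your chain proves $\Psi_2\le 2\Psi(G,S)\le 8\lambda(G,S)\le 8\lambda_2$, i.e.\ the weaker theorem $\tfrac{\Psi_2}{8}\le\lambda_2\le\Psi_2$, not the claimed constant $4$. The failure of the step is already visible on the two-vertex graph with unit masses and a single edge of conductance $\kappa$: taking $S$ to be one vertex, $\Psi(G,S)=\kappa$ while $\Psi_2=2\kappa$ (and $\lambda_2=2\kappa$), so $\Psi_2\le\Psi(G,S)$ fails and the factor $2$ is real, not slack in your estimates elsewhere.

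The reason the paper retains the constant $4$ is that it never collapses to a single one-sided Dirichlet problem. It pinches the graph at the zero level set $F_0$ of a test function (inserting zero-mass vertices), applies the Dirichlet bound on \emph{both} sides to obtain sets $A\subseteq F_{<0}$ and $B\subseteq F_{>0}$, and then merges the two one-sided ratios using the series inequality $R'(A,F_0)+R'(B,F_0)\le R'(A,B)$ (Lemma \ref{lem:res_sum}) together with the elementary bound $\max\left(\frac{1+\alpha}{1+\beta},\frac{1+\alpha^{-1}}{1+\beta^{-1}}\right)\ge 1$. In that argument the sum $\mu(A)^{-1}+\mu(B)^{-1}$ is matched against the sum of the two resistances, so no factor of $2$ is sacrificed. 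Note also that the pinch vertices are essential to any such repair: without them the series inequality fails (for a single crossing edge $(a,b)$ one has $R(a,\set{x\le 0})+R(b,\set{x\ge 0})=2R(a,b)>R(a,b)$), which is precisely why the paper works in the pinched graph $G'$.
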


\subsection{Related work}
A very recent independent paper \cite{schild2018schur} introduced a quantity $\rho(G)$ specifically in the case of the normalized Laplacian, i.e., when $M=D$. In this setting, the \psitwoquantity\ $\Psi_2(G)$ is equivalent to the definition of $\rho(G)$ up to constant factors: $\frac{\Psi_2}{2}\leq \rho\leq \Psi_2$. In \cite{schild2018schur}, it is proved that
\begin{align*}
\frac{\rho}{25600} \leq\lambda_2 \leq 2\rho.
\end{align*}
This parallels our Theorem \ref{thm:neu_graph} in the normalized Laplacian case with different constants.

The application of the Hardy-Muckenhoupt inequality to estimating the Dirichlet eigenvalue was noted in \cite{miclo1999example}. In that paper, the authors showed how to bound the Dirichlet eigenvalue on an infinite path graph by the (infinite path analogue of) $\Psi$. Specifically,
\begin{align*}
\frac{\Psi}{4}\leq\lambda \leq 2\Psi.
\end{align*}
This parallels our Theorem \ref{thm:dir_path} in the case of a vertex and edge weighted path graph with different constants.

Other methods for estimating $\lambda_2$ have been proposed.
A method for lower bounding $\lambda_2$ based on path embeddings is presented in \cite{GuLeMi99,GuatteryMiller2000,kahale1997semidefinite}. In this method, a graph with known eigenstructure is embedded into a host graph. Then the fundamental eigenvalue of the host graph can be estimated in terms of the eigenstructure of the embedded graph and the ``distortion'' of the embedding.
For a review of path embedding methods, see the introduction in \cite{GuatteryMiller2000}.

\subsection{Applications}
The Laplacian matrix, and in particular its eigenstructure, finds many applications in computer science, physics, numerical analysis, and the social sciences.
Computing the Neumann eigenvector of a graph has become a standard routine used in image segmentation \cite{shi2000normalized} and clustering \cite{von2007tutorial,ng2002spectral}.
The eigenstructure of the Laplacian is used to model virus propagation in computer networks \cite{wang2003epidemic} and design search engines \cite{brin1998anatomy}.
In numerical analysis and physics, the Laplacian matrix is used to approximate differential equations such as heat flow and the wave equations on meshes \cite{quarteroni2009numerical}.

\subsection{Roadmap}
In section \ref{sec:preliminaries}, we set notation and discuss
background related to weighted graphs, Laplacians, the eigenvalue
problems, and electrical networks.  In section
\ref{sec:weighted_hardy}, we introduce Muckenhoupt's weighted Hardy
inequality.  In section \ref{sec:dir_path}, we introduce the Hardy
quantity and the \psiquantity\ and show how Muckenhoupt's result can be used to bound the Dirichlet eigenvalue on a path graph.  In section \ref{sec:dir_graph}, we
extend the bounds on the Dirichlet eigenvalue from path graphs to
arbitrary graphs.  Finally in section \ref{sec:neu_graph}, we
introduce the two-sided Hardy quantity and the \psitwoquantity\ and extend the bounds on the Dirichlet eigenvalue on a graph to the Neumann eigenvalue on a graph.

%!TEX root = paper.tex

\section{Preliminaries}
\label{sec:preliminaries}
\subsection{Vertex and edge weighted graphs}
Let $G=(V,E,\mu,\kappa)$ be an undirected connected graph with vertex set $V$ and edge set $E$. The mass of vertex $v$ is $\mu_v$ and the conductance\footnote{
As we are dealing with spring mass systems, perhaps it would be better to refer to these quantities as spring constants and compliances. Nonetheless, we have chosen to refer to these quantities as conductances and resistances as this is the terminology most commonly found in the spectral graph theory literature.}
of edge $e$ is $\kappa_e$.
We will assume our graphs are connected and that all masses and conductances are positive.

% Let $x\in\RR^V$. We extend $x$ linearly across edges, i.e., given $(a,b)\in E$ and $\alpha\in(0,1)$, let $x(\alpha a + (1-\alpha)b) = \alpha x(a) + (1-\alpha)x(b)$. 

\subsection{Laplacians}
Let $d_v = \sum_{(u,v)\in E} \kappa_{(u,v)}$ be the degree of vertex $v$ and let $D = \diag(d_1,\dots,d_n)$ be the degree matrix. Let $A\in\RR^{V\by V}$ be the adjacency matrix of $G$, i.e. $A(u,v) = \kappa_{(u,v)}$ if $(u,v)\in E$ and $0$ otherwise. The Laplacian matrix corresponding to $G$ is $L = D-A$. Note that the quadratic form associated with $L$ is
\begin{align*}
x^\top L x = \sum_{(u,v)\in E} \kappa_{(u,v)} (x_u - x_v)^2.
\end{align*}

\subsection{The generalized Laplacian eigenvalue problem}
Let $M$ be the diagonal matrix of masses.
\begin{definition}
The \textbf{Neumann problem} on $G$ is to find
\begin{align*}
\lambda_2(G) = \min_{x\in\RR^V}\set{\frac{x^\top Lx}{x^\top Mx} \smid x^\top M1 = 0,\, x\neq 0}.
\end{align*}
We will refer to the minimum value as the \textbf{Neumann eigenvalue}.
\end{definition}

By the Courant-Fischer min-max principle, we can rewrite this quantity as
\begin{align*}
\lambda_2 &= \min_{W}\max_{x\in W} \frac{x^\top L x}{x^\top M x}
\end{align*}
where $W$ varies over the two dimensional subspaces of $\RR^V$.

At times we will consider the Laplacian eigenvalue problem with extra boundary conditions. This corresponds to fixing the value of $x$ at a given set $S$ of vertices to zero.

\begin{definition}
Let $G$ be a graph and let $S$ be a proper nonempty subset of $V$. The \textbf{Dirichlet problem} on $G$ with boundary set $S$ is to find
\begin{align*}
\lambda(G,S) &= \min_{x\in\RR^V}\set{\frac{x^\top L x}{x^\top M x}\smid x\restriction_S = 0,\,x\neq 0}.
\end{align*}
We will refer to the minimum value as the \textbf{Dirichlet eigenvalue}.
\end{definition}

\begin{remark}
Letting $L_{\bar S}$ be the principal submatrix of $L$ indexed by vertices in $\bar S$ and letting $x_{\bar S}$ be the restriction of $x$ onto the corresponding coordinates, we have
\begin{align*}
L_{\bar S} x_{\bar S}= \lambda(G,S) x_{\bar S}.
\end{align*}
In other words $x_{\bar S}$ is an eigenvector of $L_{\bar S}$. We caution that $x$ itself is, in general, not an eigenvector of $L$.
\end{remark}

\subsection{Graphs as electrical networks and effective resistance}
Given an edge-weighted graph, we can think of its edges as electrical conductors with conductance $\kappa_e$. Thinking of $x\in\RR^V$ as an assignment of voltages to the vertices of our electrical network, we have that
\begin{align*}
x^\top L x = \sum_{(u,v)\in E} \kappa_{(u,v)}(x_u-x_v)^2
\end{align*}
is the ``power dissipated in our system''. Then drawing inspiration from physics, we define the effective resistance between two sets of vertices in terms of the minimum power required to maintain a unit voltage drop.
\begin{definition}
Given nonempty disjoint sets $A,B\subseteq V$, the \textbf{effective resistance between $A$ and $B$}, denoted $R(A,B)$, is the quantity such that
\begin{align*}
\frac{1}{R(A,B)} &= \min_{x\in\RR^V}\set{x^\top L x \smid x\restriction_A =1,\, x\restriction_B = 0}.
\end{align*}
\end{definition}

If $A=\set{a}$ is a single element, we will opt to write $R(a,B)$ instead of the more cumbersome $R(\set{a},B)$. Similarly we will write $R(A,b)$ or $R(a,b)$ where appropriate.

\begin{remark}
When $A=\set{a}$ and $B=\set{b}$ are singleton sets, this definition agrees with the standard definition $R(a,b) = \chi_{a,b} L^+ \chi_{a,b}$. In general, we can define $R(A,B)$ in a different way. Consider contracting all vertices in $A$ to a single Vertex $v_A$ and all vertices to a single vertex $v_B$. Then $R(A,B)$ is the effective resistance between $v_A$ and $v_B$ in the new graph. This is the definition given in \cite{schild2018schur}.
\end{remark}

\subsection{Miscellaneous notation}
In our paper $\NN = \set{1,2,\dots}$ does not contain $0$.
%!TEX root = paper.tex

\section{Weighted Hardy inequalities}
\label{sec:weighted_hardy}
The following theorem, due to Muckenhoupt \cite{muckenhoupt1972hardy}, relates the $L_2$ norm of the ``running integral'' of a function to its $L_2$ norm.\footnote{The original theorem deals more generally with $L_p$ norms and Borel measures --- see \cite{muckenhoupt1972hardy}.} We refer to this inequality as the Muckenhoupt-Hardy inequality.

\begin{theorem}\label{thm:Muckenhoupt}[Muckenhoupt 1972]
Let $\mu$, $\kappa$ be functions from $\RR_{\geq 0}$ to $\RR_{>0}$. Let $C$ be the smallest (possibly infinite) constant such that for all $f\in L^1_{\text{loc}}(\RR_{\geq 0})$,
\begin{align*}
\int_0^\infty \mu(x) \left(\int_0^xf(t)\,dt\right)^2\,dx &\leq C\int_0^\infty \kappa(x)f(x)^2\,dx.
\end{align*}
Let
\begin{align*}
B = \sup_{r>0} \left(\int_r^\infty \mu(x)\,dx \right)\left(\int_0^r \frac{1}{\kappa(x)}\,dx\right).
\end{align*}
Then $B\leq C\leq 4B$. In particular, $C$ is finite if and only if $B$ is finite.
\end{theorem}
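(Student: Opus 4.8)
The plan is to prove the two inequalities separately: the lower bound $B \le C$ by testing against an explicit family of functions, and the harder upper bound $C \le 4B$ by a weighted Cauchy--Schwarz argument followed by a kernel estimate. Throughout I would abbreviate $V(x) = \int_0^x \kappa(t)^{-1}\,dt$ and $U(x) = \int_x^\infty \mu(t)\,dt$, so that $B = \sup_{r>0} U(r)V(r)$, with $V' = 1/\kappa$ and $U' = -\mu$.

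For the lower bound, fix $r > 0$ and apply the defining inequality to the test function $f = \kappa^{-1}\indicator{t \le r}$. The right-hand side evaluates to $\int_0^r \kappa^{-1} = V(r)$, while the running integral satisfies $\int_0^x f = V(r)$ for every $x \ge r$, so the left-hand side is at least $\int_r^\infty \mu(x)V(r)^2\,dx = U(r)V(r)^2$. The inequality then reads $U(r)V(r)^2 \le C\,V(r)$; dividing by $V(r) > 0$ and taking the supremum over $r$ gives $B \le C$. The degenerate cases ($V(r) = \infty$, or $B = \infty$) are handled by truncating the test function and passing to a limit.

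For the upper bound, assume $B < \infty$, and, since replacing $f$ by $\abs{f}$ only increases the left-hand side, assume $f \ge 0$. The key step is to split the running integral with the weight $V^{1/2}$ and apply Cauchy--Schwarz:
\[
\left(\int_0^x f\,dt\right)^2 \le \left(\int_0^x f(t)^2 \kappa(t) V(t)^{1/2}\,dt\right)\left(\int_0^x \frac{V(t)^{-1/2}}{\kappa(t)}\,dt\right) = 2V(x)^{1/2}\int_0^x f(t)^2\kappa(t)V(t)^{1/2}\,dt,
\]
where the evaluation uses $\int_0^x V^{-1/2}V' = 2V(x)^{1/2}$ and $V(0) = 0$. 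Multiplying by $\mu(x)$, integrating in $x$, and swapping the order of integration (Tonelli, since every integrand is nonnegative) turns the target into $2\int_0^\infty f(t)^2\kappa(t)V(t)^{1/2}\bigl(\int_t^\infty \mu(x)V(x)^{1/2}\,dx\bigr)\,dt$. It therefore suffices to establish the pointwise kernel bound $V(t)^{1/2}\int_t^\infty \mu(x)V(x)^{1/2}\,dx \le 2B$, which yields exactly the factor $2\cdot 2B = 4B$.

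The main obstacle is this kernel estimate, which I expect to handle by integration by parts combined with the defining bound $U(x) \le B/V(x)$. Writing $\int_t^\infty \mu V^{1/2} = \int_t^\infty (-U')V^{1/2}$ and integrating by parts produces the boundary term $U(t)V(t)^{1/2}$ (the term at infinity being nonpositive) plus $\tfrac12\int_t^\infty U(x)V(x)^{-1/2}V'(x)\,dx$. Inserting $U(x) \le B/V(x)$ into both pieces and using $\int_t^\infty V^{-3/2}V' \le 2V(t)^{-1/2}$ gives $U(t)V(t)^{1/2} \le BV(t)^{-1/2}$ and $\tfrac12\int_t^\infty U V^{-1/2}V' \le BV(t)^{-1/2}$, so the whole expression is at most $2BV(t)^{-1/2}$; multiplying by $V(t)^{1/2}$ delivers the bound $2B$. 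The delicate points to check are the nonpositivity or vanishing of the boundary terms at $0$ and $\infty$ under $B < \infty$, handled by the estimate $U(x)V(x)^{1/2} \le BV(x)^{-1/2}$, and the degenerate regimes where $V(\infty) < \infty$ or $\int_0^\infty \mu = \infty$, dispatched by splitting off the corresponding limit.
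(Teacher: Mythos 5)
Your proof is correct, and it is essentially the same argument the paper uses: the paper never proves the continuous statement itself (it defers that to Muckenhoupt's paper), but its proof of the discrete rephrasing, Theorem \ref{thm:dir_path}, follows exactly this route, with your test functions $\kappa^{-1}$ on $[0,r]$ for the bound $B\le C$ corresponding to the paper's tail-set test vectors $x\restriction_{A_k}=1$ for the bound $\lambda\le 1/H$. Your Cauchy--Schwarz splitting with weight $\kappa^{1/2}V^{1/4}$, the evaluation $\int_0^x V^{-1/2}V'\,dt = 2V(x)^{1/2}$, the order swap, and the two insertions of $UV\le B$ match step-for-step the paper's discrete weights $\kappa_j^{1/2}R(x_0,x_j)^{1/4}$, its telescoping inequality $\tfrac{1}{\beta}(\beta^2-\alpha^2)\le 2(\beta-\alpha)$, its exchange of the order of summation, and its two uses of $R(x_0,x_i)\mu(A_i)\le H$; the only cosmetic difference is that you swap first and then control the kernel $V(t)^{1/2}\int_t^\infty \mu V^{1/2}$ by integration by parts, whereas the paper substitutes the Hardy bound before swapping and telescopes in $\mu(A_i)^{1/2}$, both yielding the same constant $4$.
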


Letting $f=\frac{d}{dx} g$ for some function $g$ with $g(0)=0$ and dividing through by the constant $C$ and the term on the left, we can reinterpret the Muckenhoupt-Hardy inequality as a bound on the Dirichlet eigenvalue on the nonnegative line.
In the next section we will make this statement formal and give a proof of the rephrased theorem in the finite, discrete case. Our proof will be stated in the language of graph Laplacians but closely follows the structure of \cite{miclo1999example,muckenhoupt1972hardy} and is only included for completeness.

\section{The Dirichlet problem on path graphs}
\label{sec:dir_path}
Throughout this section, let $G=(V,E,\mu,\kappa)$ be a vertex and edge weighted connected path graph. Let the vertices be $v_0,v_1,\dots,v_N$ and let the boundary set be $S = \set{v_0}$. Let $E=\set{(v_i,v_{i-1})\smid i\in[N]}$ and let edge $(v_i,v_{i-1})$ have conductance $\kappa_i$. Let vertex $v_i$ have mass $\mu_i$.

\subsection{The Hardy quantity and the \psiquantity}
For $A\subseteq V$, let $\mu(A) = \sum_{v_i\in A} \mu_i$.

Let $A\subseteq V\setminus S$ be a set of vertices disjoint from the boundary.
Consider the graph consisting of two vertices $v_S,v_A$ and let the boundary set be $\set{v_S}$. Let $v_A$ have mass $\mu(A)$ and let the edge $(v_S,v_A)$ has conductance $R(S,A)^{-1}$. Then the Dirichlet eigenvalue of this two node system is given by $\frac{R(S,A)^{-1}}{\mu(A)}$. We will define the \psiquantity\ $\Psi$ to be the minimum such quantity and, for historical reasons, we will define the Hardy quantity to be $H=\Psi^{-1}$.

\begin{definition}
Define the \textbf{\psiquantity}, $\Psi$, to be
\begin{align*}
\Psi = \min_{A\subseteq V}\set{\frac{R(S,A)^{-1}}{\mu(A)} \smid A\neq \emptyset,\, A\cap S = \emptyset}.
\end{align*}
\end{definition}
\begin{definition}
Define the \textbf{Hardy quantity} to be $H = \Psi^{-1}$, i.e.
\begin{align*}
H = \max_{A\subseteq V}\set{R(S,A)\mu(A) \smid A\neq \emptyset,\, A\cap S = \emptyset}.
\end{align*}
\end{definition}

In a path graph, we may choose to optimize over tail sets. This gives us a second characterization of $H$ (and thus $\Psi$) on path graphs.
\begin{lemma}
Let $A_k = \set{v_i\smid i\geq k}$ be the tail set beginning at $v_k$. Then
\begin{align*}
H &= \max_{1\leq k\leq N} \left(\sum_{i=1}^k \frac{1}{\kappa_i}\right)\mu(A_k).
\end{align*}
\end{lemma}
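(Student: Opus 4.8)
The plan is to prove the two matching inequalities separately. Each tail set $A_k$ is a legitimate competitor in the definition of $H$ (it is nonempty and disjoint from $S = \set{v_0}$), so once we verify the resistance formula $R(S,A_k) = \sum_{i=1}^k \frac{1}{\kappa_i}$ we immediately get $H \geq \max_{1\leq k\leq N}\left(\sum_{i=1}^k \frac{1}{\kappa_i}\right)\mu(A_k)$. The real content of the lemma is the reverse inequality: that the maximum defining $H$ is always attained at a tail set.

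The key computation I would carry out first is an explicit formula for $R(S,A)$ on a path. Fix a nonempty $A$ disjoint from $S$ and let $k = \min\set{i \smid v_i \in A}$ be the index of the vertex of $A$ closest to the boundary. I claim $R(S,A) = \sum_{i=1}^k \frac{1}{\kappa_i}$, depending only on $k$. To see this, examine the variational problem $\frac{1}{R(S,A)} = \min_x \set{x^\top L x \smid x\restriction_S = 1,\, x\restriction_A = 0}$. Because $v_k \in A$ forces $x_{v_k} = 0$, the path decouples at $v_k$: the quadratic form $\sum_{i=1}^N \kappa_i(x_{v_i} - x_{v_{i-1}})^2$ splits into the block of edges $1,\dots,k$ (involving only $x_{v_0},\dots,x_{v_k}$) and the block of edges $k+1,\dots,N$. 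Every vertex of index greater than $k$ is either in $A$ or unconstrained, so we may set all of them to $0$, annihilating the second block; the first block is then the standard series-resistor problem with fixed endpoints $x_{v_0}=1$ and $x_{v_k}=0$, whose minimum value is $\left(\sum_{i=1}^k \frac{1}{\kappa_i}\right)^{-1}$. This yields the claimed formula, and as the special case $A = A_k$ it establishes $R(S,A_k) = \sum_{i=1}^k \frac{1}{\kappa_i}$.

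With the formula in hand the lemma follows quickly. Given an arbitrary nonempty $A$ disjoint from $S$ with minimal index $k$, we have $A \subseteq A_k$ and hence $\mu(A) \leq \mu(A_k)$, while $R(S,A) = \sum_{i=1}^k \frac{1}{\kappa_i} = R(S,A_k)$ since both sets share the minimal index $k$. Therefore $R(S,A)\mu(A) \leq R(S,A_k)\mu(A_k) \leq \max_{1\leq j\leq N} R(S,A_j)\mu(A_j)$. Taking the maximum over all admissible $A$ gives $H \leq \max_{1\leq k\leq N}\left(\sum_{i=1}^k \frac{1}{\kappa_i}\right)\mu(A_k)$, which combines with the lower bound above to give equality.

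I expect the main obstacle to be the effective-resistance computation itself, and specifically the clean justification that $R(S,A)$ is insensitive to everything about $A$ except its vertex closest to $S$. The decoupling at $v_k$, forced by the constraint $x_{v_k}=0$, is the crucial structural fact special to paths; it is worth isolating as its own observation, since it is exactly what fails on a general graph, where current may reach $A$ along several routes and the resistance depends on the full geometry of $A$ rather than a single vertex.
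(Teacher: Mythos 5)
Your proposal is correct and follows essentially the same route as the paper's proof: reduce an arbitrary admissible $A$ to the tail set $A_k$ sharing its minimal index $k$, using $\mu(A)\leq\mu(A_k)$ together with $R(S,A)=R(S,A_k)=\sum_{i=1}^k \kappa_i^{-1}$. The only difference is that the paper asserts these resistance identities without justification, whereas you supply the decoupling-at-$v_k$ argument that proves them; this is a useful elaboration, not a different approach.
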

\begin{proof}
Let $A\subseteq V\setminus S$. Let $k = \min\set{i\smid v_i\in A}$ be the minimum element in $A$. Then $R(S,A) = R(S,A_k)$ and $\mu(A_k)\geq \mu(A)$. Note also that on a path graph $R(S,A_k) = \sum_{i=1}^k \kappa_i^{-1}$.
\end{proof}

\subsection{Bounding the Dirichlet eigenvalue}

\begin{theorem}\label{thm:dir_path}
Let $G$ be a vertex and edge weighted connected path graph. Let $\lambda(G,v_0)$ be the Dirichlet eigenvalue and let $H(G,v_0)$ be the Hardy quantity of $G$. Then,
\begin{align*}
\frac{1}{4H}\leq\lambda \leq \frac{1}{H}.
\end{align*}
\end{theorem}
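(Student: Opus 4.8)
The plan is to recognize the claim as the finite, discrete instance of the Muckenhoupt--Hardy inequality (Theorem \ref{thm:Muckenhoupt}) with the role of $B$ played by $H$, and to prove it directly in the graph-Laplacian language. Writing $y_j = x_j - x_{j-1}$ and using $x\restriction_S = x_0 = 0$, we have $x^\top L x = \sum_{j=1}^N \kappa_j y_j^2$ and $x_i = \sum_{j=1}^i y_j$; by the variational definition of $\lambda$, the two desired inequalities are equivalent to (i) exhibiting a single admissible $x$ with $\frac{x^\top L x}{x^\top M x} \le \frac1H$, and (ii) showing $x^\top M x \le 4H\, x^\top L x$ for \emph{every} admissible $x$. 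Throughout I will write $\rho_i = \sum_{j=1}^i \kappa_j^{-1}$ for the accumulated resistance, so that $\rho_i = R(S,A_i)$ and the preceding lemma reads $\rho_i \mu(A_i) \le H$ for all $i$.

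For the upper bound (i), I would take the test vector arising from the effective resistance between $S$ and the tail set $A_k$ maximizing $\rho_k\mu(A_k)$: set $x_i = \rho_i/\rho_k$ for $i \le k$ and $x_i = 1$ for $i \ge k$. This $x$ is the unit-voltage potential for $R(S,A_k)$, so a direct computation gives $x^\top L x = R(S,A_k)^{-1} = \rho_k^{-1}$, while $x^\top M x \ge \mu(A_k)$ because $x_i = 1$ on $A_k$ and $x \ge 0$ elsewhere. Hence $\lambda \le \rho_k^{-1}/\mu(A_k) = 1/H$.

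The lower bound (ii) is the main obstacle and is where the constant $4$ is earned. I would first apply a weighted Cauchy--Schwarz to the telescoping sum $x_i = \sum_{j=1}^i y_j$, splitting $y_j = \bigl(\kappa_j^{1/2}\rho_j^{1/4} y_j\bigr)\bigl(\kappa_j^{-1/2}\rho_j^{-1/4}\bigr)$, to obtain
\begin{align*}
x_i^2 \le \left(\sum_{j=1}^i \kappa_j \rho_j^{1/2} y_j^2\right)\left(\sum_{j=1}^i \frac{\rho_j^{-1/2}}{\kappa_j}\right).
\end{align*}
Since $\frac{1}{\kappa_j} = \rho_j - \rho_{j-1}$ and $\rho^{-1/2}$ is decreasing, the second factor is a lower Riemann sum for $\int_0^{\rho_i}\rho^{-1/2}\,d\rho$, hence at most $2\rho_i^{1/2}$. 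Summing against $\mu_i$ and exchanging the order of summation gives $\sum_i \mu_i x_i^2 \le 2\sum_j \kappa_j \rho_j^{1/2}y_j^2\left(\sum_{i\ge j}\mu_i\rho_i^{1/2}\right)$, so it remains to show $\rho_j^{1/2}\sum_{i\ge j}\mu_i\rho_i^{1/2}\le 2H$.

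This last estimate I would handle by Abel summation against the decreasing tail masses $U_i = \mu(A_i)$: writing $\mu_i = U_i - U_{i+1}$ converts $\sum_{i\ge j}\mu_i\rho_i^{1/2}$ into $U_j\rho_j^{1/2} + \sum_{i>j}U_i\bigl(\rho_i^{1/2}-\rho_{i-1}^{1/2}\bigr)$, after which the bound $U_i \le H/\rho_i$ and the comparison $\rho_i^{-1}\bigl(\rho_i^{1/2}-\rho_{i-1}^{1/2}\bigr) \le \frac12\int_{\rho_{i-1}}^{\rho_i}\rho^{-3/2}\,d\rho$ telescope to $2H\rho_j^{-1/2}$. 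Combining the two factors yields $x^\top M x \le 4H\, x^\top L x$, and hence $\lambda \ge 1/(4H)$. The delicate points, which I expect to require the most care, are choosing exactly the exponent $\rho^{1/2}$ in the Cauchy--Schwarz step (a general exponent $\rho^{\alpha}$ would produce the worse constant $\frac{1}{\alpha(1-\alpha)}$, minimized precisely at $\alpha = \tfrac12$) and verifying the two monotone-function Riemann-sum comparisons so that both contributions emerge with the clean factors needed to land on $4$ rather than a larger constant.
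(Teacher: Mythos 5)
Your proposal is correct and follows essentially the same route as the paper: the upper bound via the unit-voltage (minimum-energy) potential on an optimal tail set $A_k$, and the lower bound via the Muckenhoupt-style Cauchy--Schwarz with weights $\kappa_j^{1/2}\rho_j^{1/4}$ followed by two telescoping estimates, landing on the same constant $4$. The only difference is cosmetic and sits in the second telescoping step: you Abel-sum against the tail masses and telescope the differences $\rho_{i-1}^{-1/2}-\rho_i^{-1/2}$ after applying $\mu(A_i)\le H/\rho_i$ at each index, whereas the paper first substitutes $\rho_i^{1/2}\le H^{1/2}\mu(A_i)^{-1/2}$, swaps the order of summation, and telescopes $\mu(A_i)^{1/2}-\mu(A_{i+1})^{1/2}$ in the mass variable instead.
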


We reiterate that the below proof has been known since \cite{muckenhoupt1972hardy} and is included only for completeness.
\begin{proof}
We begin by proving the upper bound. Note that if $x\restriction_A = 1$, then $x^\top Mx \geq \mu(A)$. Applying this bound to $\lambda$, we note that the numerator of the Rayleigh quotient becomes an effective resistance term.
\begin{align*}
\lambda &= \min_x\set{ \frac{x^\top Lx}{x^\top Mx}\smid x_0 = 0,\, x\neq 0}\\
&\leq \min_{1\leq k\leq N} \min_x\set{ \frac{x^\top Lx}{x^\top Mx}\smid x_0 = 0,\,x\restriction_{A_k}=1}\\
&\leq \min_{1\leq k\leq N} \frac{1}{\mu(A_k)} \min_x\set{x^\top Lx\smid x_0=0,\,x\restriction_{A_k} = 1}\\
&= \min_{1\leq k\leq N} \frac{R(S,A_k)^{-1}}{\mu(A_k)}\\
&= H^{-1}.
\end{align*}

On the other hand, let $x$ be an arbitrary nonzero vector with $x_0 = 0$. Applying Cauchy-Schwarz to the voltage drops,
\begin{align*}
\sum_{i=1}^N \mu_i x_i^2 &= \sum_{i=1}^N \mu_i \left(\sum_{j=1}^i (x_j -x_{j-1})\right)^2\\
&= \sum_{i=1}^N \mu_i \left(\sum_{j=1}^i (x_j -x_{j-1})\kappa^{1/2}_jR(x_0,x_j)^{1/4}\frac{1}{\kappa^{1/2}_jR(x_0,x_j)^{1/4}}\right)^2\\
&\leq \sum_{i=1}^N \mu_i \sum_{j=1}^i (x_j -x_{j-1})^2\kappa_jR(x_0,x_j)^{1/2}\sum_{j=1}^i\frac{1}{\kappa_jR(x_0,x_j)^{1/2}}.
\end{align*}

We use the following inequality: for $0<\alpha\leq \beta$, $\frac{1}{\beta}(\beta^2-\alpha^2) = \frac{\beta+\alpha}{\beta}(\beta-\alpha) \leq 2(\beta-\alpha)$. Note that $\frac{1}{\kappa_j} = R(x_0,x_j) - R(x_0,x_{j-1})$, thus the above inequality allows us to bound the second summation as a telescoping series.
\begin{align*}
\sum_{j=1}^i\frac{1}{\kappa_jR(x_0,x_j)^{1/2}} &\leq 2\sum_{j=1}^i R(x_0,x_j)^{1/2} - R(x_0,x_{j-1})^{1/2}\\
&= 2R(x_0,x_i)^{1/2}.
\end{align*}
Comparing this to the Hardy quantity, we have that $R(x_0,x_i)\leq \frac{H}{\mu(A_i)}$. We complete the bound of the original expression by substituting in our estimate of the second summation, switching the order of summation, then applying $\frac{1}{\beta}(\beta^2 - \alpha^2)\leq 2(\beta-\alpha)$ a second time.
\begin{align*}
\sum_{i=1}^N \mu_ix_i^2 &\leq 2H^{1/2} \sum_{i=1}^N \frac{\mu_i}{\mu^{1/2}(A_i)} \sum_{j=1}^i (x_j -x_{j-1})^2\kappa_jR(x_0,x_j)^{1/2}\\
&= 2H^{1/2} \sum_{j=1}^N (x_j -x_{j-1})^2\kappa_jR(x_0,x_j)^{1/2} \sum_{i=j}^N \frac{\mu_i}{\mu^{1/2}(A_i)}\\
&\leq 4H^{1/2}\sum_{j=1}^N (x_j -x_{j-1})^2\kappa_jR(x_0,x_j)^{1/2} \left(\sum_{i=j}^{N-1} \left(\mu^{1/2}(A_i)-\mu^{1/2}(A_{i+1})\right) + \mu^{1/2}_N\right)\\
&\leq 4H^{1/2} \sum_{j=1}^N (x_j -x_{j-1})^2\kappa_jR(x_0,x_j)^{1/2}\mu^{1/2}(A_j)\\
&\leq 4H\sum_{j=1}^N \kappa_j (x_j -x_{j-1})^2.
\end{align*}
Rearranging, we have that for all $x\in\RR^V$ with $x_0 = 0$,
\begin{align*}
\frac{1}{4H}\leq \frac{\sum_{i=1}^N \kappa_i (x_i -x_{i-1})^2}{\sum_{i=1}^N \mu_i x_i^2}.
\end{align*}
Then minimizing over such $x$ concludes the proof.
\end{proof}

The following theorem follows as a corollary.
\begin{theorem}\label{thm:dir_path}
Let $G$ be a vertex and edge weighted connected path graph. Let $\lambda(G,v_0)$ be the Dirichlet eigenvalue and let $\Psi(G,v_0)$ be the \psiquantity\ of $G$. Then,
\begin{align*}
\frac{\Psi}{4}\leq\lambda \leq \Psi.
\end{align*}
\end{theorem}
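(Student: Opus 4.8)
The plan is to observe that this statement is nothing more than the preceding theorem rewritten under the change of variables relating the Hardy quantity and the \psiquantity. Recall from the definition of the Hardy quantity that $H = \Psi^{-1}$, equivalently $\Psi = 1/H$. First I would invoke the preceding theorem, which establishes $\frac{1}{4H}\leq \lambda \leq \frac{1}{H}$ for any vertex and edge weighted connected path graph with boundary set $\set{v_0}$. Substituting $1/H = \Psi$ into both the lower and upper bounds then yields $\frac{\Psi}{4}\leq \lambda \leq \Psi$ immediately.

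Since the entire analytic content has already been carried out in proving the bound on $1/H$ --- namely the Cauchy--Schwarz estimate on the voltage drops, the telescoping of the effective-resistance increments via $\frac{1}{\kappa_j} = R(x_0,x_j) - R(x_0,x_{j-1})$, and the two applications of the elementary inequality $\frac{1}{\beta}(\beta^2-\alpha^2)\leq 2(\beta-\alpha)$ --- there is no genuine obstacle remaining. The only step is the algebraic reciprocal substitution $\Psi = H^{-1}$, so I would present this as a one-line corollary: cite the previous theorem for the bounds in terms of $H$ and then apply the definition $H = \Psi^{-1}$ to both sides.
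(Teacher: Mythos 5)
Your proposal is correct and is exactly what the paper does: the paper states this theorem as an immediate corollary of the preceding $H$-version bound $\frac{1}{4H}\leq\lambda\leq\frac{1}{H}$, with the substitution $\Psi = H^{-1}$ supplying the claimed inequalities. No further argument is needed, since all the analytic work lives in the proof of the preceding theorem.
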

\section{The Dirichlet problem on general graphs}
\label{sec:dir_graph}
Throughout this section, let $G=(V,E,\mu,\kappa)$ be a vertex and edge weighted connected graph. Let the boundary set, $S$, be a proper nonempty subset of $V$.

\subsection{Bounding the Dirichlet eigenvalue}

\begin{theorem}\label{thm:dir_graph}
Let $G$ be a vertex and edge weighted connected graph with boundary set $S$, a proper nonempty subset of $V$.
Let $\lambda(G,S)$ be the Dirichlet eigenvalue and let $H(G,S)$ be the Hardy quantity of $G$.
Then
\begin{align*}
\frac{1}{4H}&\leq \lambda \leq \frac{1}{H}.
\end{align*}
\end{theorem}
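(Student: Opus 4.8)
The plan is to treat the two bounds separately; the upper bound generalizes the path argument almost verbatim, while the lower bound is the real content and forces me to replace the ``running sum along the path'' by an electrical flow.

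For the upper bound $\lambda \le 1/H$, I would fix any nonempty $A \subseteq V$ with $A \cap S = \emptyset$ and take as a test vector the energy minimizer $x$ subject to $x\restriction_S = 0$ and $x\restriction_A = 1$. By the definition of effective resistance this gives $x^\top L x = R(S,A)^{-1}$, and since $x \equiv 1$ on $A$ we have $x^\top M x \ge \mu(A)$. Hence $\lambda \le x^\top L x / x^\top M x \le R(S,A)^{-1}/\mu(A)$, and minimizing over $A$ yields $\lambda \le \Psi = 1/H$. No idea beyond the path case is needed here.

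For the lower bound $\lambda \ge 1/(4H)$ I would mimic Muckenhoupt's Cauchy--Schwarz computation from Theorem \ref{thm:dir_path}, whose only genuinely one-dimensional ingredient is the telescoping identity $x_i = \sum_{j\le i}(x_j - x_{j-1})$ along the unique path to $v_i$. The substitute on a general graph is the observation that for any unit flow $i^{(v)}$ from $S$ to $v$ and any $x$ with $x\restriction_S = 0$, summation by parts gives $x_v = \sum_{e} i^{(v)}_e\, \Delta_e x$, where $\Delta_e x$ is the voltage drop across $e$. Taking $i^{(v)}$ to be the electrical flow makes $\sum_e (i^{(v)}_e)^2/\kappa_e = R(S,v)$, exactly the effective resistance; this is the step that correctly accounts for parallel routes, which a single path would over-count. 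I would then apply Cauchy--Schwarz to each $x_v = \sum_e i^{(v)}_e \Delta_e x$ against resistance-weighted edge factors, bound the resulting resistance sum by the telescoping inequality $(\beta^2-\alpha^2)/\beta \le 2(\beta-\alpha)$ applied to effective-resistance increments, sum over $v$ against the masses $\mu_v$, swap the order of the vertex and edge sums, and apply the Hardy bound $R(S,A) \le H/\mu(A)$ at the inner telescoping --- precisely the pattern of the path proof, which is engineered to produce the factor $4$.

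The hard part will be reconciling the single-vertex resistance $R(S,v)$, which the flow identity hands us, with the tail-set resistance $R(S,A)$ that the Hardy quantity actually controls. On a path these coincide, since the resistance to the tail beginning at $v$ equals $R(S,v)$, which is exactly why the path argument closes; on a general graph one only has $R(S,v) \ge R(S,A_v)$ for the super-level set $A_v = \{u : R(S,u) \ge R(S,v)\}$, so invoking the Hardy bound with singletons would be far too weak. The resolution I would pursue is to organize both the Cauchy--Schwarz weights and the double summation around effective-resistance level sets (shells) rather than individual vertices, bounding the flow energy accumulated across each shell by the increment of $R(S,A_k)$, so that the Hardy bound can be applied at shell granularity without sacrificing the constant. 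I would also keep in mind that a black-box reduction to a single path cannot work: a small diamond example shows that the super-level-set path built to satisfy $H(\text{path}) \le H(G)$ can carry strictly more Dirichlet energy than $G$ itself, so the parallel structure must be handled intrinsically by the flow rather than flattened away.
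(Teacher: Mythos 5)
Your upper bound is exactly the paper's argument and is fine. The lower bound, however, has a genuine gap, and it sits precisely at the point you flagged as hard: the flow-plus-shells resolution is a plan, not an argument, and I do not believe it closes as stated. In the path proof, the three steps (Cauchy--Schwarz, swap the double sum, telescope) all rest on the fact that every vertex uses the \emph{same} nested family of tail sets: the vertices whose running sums use edge $j$ form the tail $A_j$, and the masses telescope via $\sum_{i\geq j}\mu_i/\mu^{1/2}(A_i)\leq 2\mu^{1/2}(A_j)$. On a general graph the unit electrical flows $i^{(v)}$ are a different vector field for each $v$, so after swapping the sums the coefficient of an edge $e$ involves a sum of $\mu_v\,i^{(v)}_e(\cdots)$ over vertices $v$ with no common nested downstream structure, and there is nothing to telescope against. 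Organizing by resistance shells does not repair this: the energy that $i^{(v)}$ dissipates on the edges crossing a given shell is \emph{not} the increment of $R(S,A_k)$ across that shell (such an identity would hold for a flow adapted to the shells, not for the flow adapted to $v$), and, as you yourself observed, $R(S,v)\geq R(S,A_v)$ points the wrong way for invoking the Hardy bound. So the proposal stalls exactly at the step it was designed to fix.

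Moreover, the impossibility claim you use to rule out the alternative is mistaken, and that alternative is precisely the paper's proof. The paper reduces to Theorem \ref{thm:dir_path} by collapsing the level sets of the \emph{optimal eigenvector} $x$ itself, after first using Fact \ref{fact:cut_edges} to subdivide (with zero-mass vertices) every edge whose endpoint values are more than one level apart, so that all edges of the resulting graph $G'$ join adjacent level classes. Collapsing classes only shrinks the space of test vectors, so in general it can only \emph{increase} the Dirichlet minimum --- this is exactly what your diamond example detects when the classes are resistance shells, on which the minimizer need not be constant. But the minimizer \emph{is} constant on its own level sets, so for this particular quotient the path minimum equals $\lambda(G,S)$ exactly; parallel edges between adjacent classes simply add their conductances, which is electrically correct for class-constant vectors. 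Finally, the Hardy quantity of the quotient path is at most $H(G)$, since identifying vertices (shorting) only decreases effective resistances and the subdivision vertices carry no mass. Chaining these gives $\lambda(G,S)=\lambda(\mathrm{path})\geq \frac{1}{4H(\mathrm{path})}\geq \frac{1}{4H(G)}$, with no flow machinery at all. (Note also that nothing forces the use of the eigenvector: the same collapse applied to the level sets of an \emph{arbitrary} test vector bounds its Rayleigh quotient below by $\frac{1}{4H(G)}$, which is perhaps the cleanest way to see that ``flattening'' is harmless when done along the test function's own levels.)
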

The proof of the upper bound in the graph case is the same as the proof of the upper bound in the path case.
\begin{proof}[Proof of upper bound.]
Note,
\begin{align*}
\lambda &= \min_x\set{ \frac{x^\top Lx}{x^\top Mx}\smid x\restriction_S = 0,\, x\neq 0}\\
&\leq \min_{A\subseteq V,\,x}\set{\frac{x^\top Lx}{x^\top Mx}\smid A\neq \emptyset,\,A\cap S = \emptyset,\,x_0 = 0,\,x\restriction_{A_k}=1}\\
&\leq \min_{A\subseteq V}\set{\frac{R(S,A_k)^{-1}}{\mu(A_k)}\smid A\neq \emptyset,\,A\cap S = \emptyset}\\
&= H^{-1}.
\end{align*}
\end{proof}

Before proving the lower bound, we state a useful fact.
\begin{fact}\label{fact:cut_edges}
Let $e=(a,b)$ be an edge with conductance $\kappa$ and let $\alpha_1,\dots,\alpha_k>0$ such that $\sum_{i=1}^k \alpha_i = 1$. Consider splitting the edge $e$ into $k$ segments, $e_1,\dots,e_k$, with conductance $\kappa(e_i) = \frac{\kappa}{\alpha_i}$ by inserting $k-1$ zero mass vertices.
Let $G$ be the original graph and let $G'$ be the new graph.
Then $\lambda(G,S) = \lambda(G',S)$. In particular, given $x\in\RR^V$ let $y\in \RR^{V'}$ be the linear extension of $x$, then $x^\top L x = y^\top L' y$.
\end{fact}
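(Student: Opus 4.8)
The plan is to prove the ``in particular'' identity first and then derive the equality of Dirichlet eigenvalues from it together with a one-line energy-minimization argument. First I would make the linear extension explicit. Label the inserted vertices $w_1,\dots,w_{k-1}$ along the split edge from $a$ to $b$, and write the voltages as $y_0 := x_a$, $y_i := y(w_i)$ for $1\leq i\leq k-1$, and $y_k := x_b$, so that segment $e_i$ joins $y_{i-1}$ and $y_i$ and has conductance $\kappa/\alpha_i$. The linear extension is the assignment $y_i = x_a - (\alpha_1+\cdots+\alpha_i)(x_a-x_b)$, i.e. the drop across $e_i$ is $y_{i-1}-y_i = \alpha_i(x_a-x_b)$; this is exactly the proportional-drop assignment of a resistor chain carrying a common current. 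With this choice, segment $e_i$ contributes $\frac{\kappa}{\alpha_i}\big(\alpha_i(x_a-x_b)\big)^2 = \kappa\,\alpha_i(x_a-x_b)^2$ to $y^\top L' y$, and summing over $i$ with $\sum_i \alpha_i = 1$ recovers precisely the contribution $\kappa(x_a-x_b)^2$ of the original edge $e$ to $x^\top L x$. Since every other edge is untouched, this gives $x^\top L x = y^\top L' y$. Because the inserted vertices carry zero mass and $y$ agrees with $x$ on $V$, we also get $y^\top M' y = x^\top M x$.

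For the inequality $\lambda(G',S)\leq\lambda(G,S)$, I would take a minimizer $x$ of the Dirichlet problem on $G$ and pass to its linear extension $y$. Since $S\subseteq V$ and $y\restriction_V = x$, the constraint $y\restriction_S = 0$ holds, and the two identities above show that the Rayleigh quotient of $y$ on $G'$ equals that of $x$ on $G$; hence $\lambda(G',S)\leq\lambda(G,S)$.

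The reverse inequality is the step I expect to be the crux. Here I would take a minimizer $y'$ for $G'$, set $x := y'\restriction_V$, and compare $y'$ with the linear extension $y$ of $x$. The two vectors agree on $V$ but may differ at the inserted vertices. The key point is that the linear extension assigns to each inserted vertex its harmonic, energy-minimizing value given the endpoint voltages: on the series chain $e_1,\dots,e_k$ the energy $\sum_{i=1}^k \frac{\kappa}{\alpha_i}(y_{i-1}-y_i)^2$ is minimized, for fixed values at $a$ and $b$, exactly by the proportional-drop assignment. Consequently $y^\top L' y \leq y'^\top L' y'$. Since the inserted vertices have zero mass, the replacement does not affect the mass term: $y^\top M' y = x^\top M x = y'^\top M' y'$. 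Combining this with $y^\top L' y = x^\top L x$ gives
\begin{align*}
\frac{x^\top L x}{x^\top M x} = \frac{y^\top L' y}{y^\top M' y} \leq \frac{y'^\top L' y'}{y'^\top M' y'} = \lambda(G',S).
\end{align*}
Finally $y'^\top M' y' > 0$ forces $x^\top M x > 0$, so $x$ is a legitimate feasible vector for the Dirichlet problem on $G$ (in particular $x\neq 0$), and we conclude $\lambda(G,S)\leq\lambda(G',S)$. The two inequalities together prove the claim. The only genuine content is the harmonicity/energy-minimization step used in the reverse direction; everything else is the series-resistance bookkeeping, and one could alternatively reduce to the case $k=2$ by induction, splitting off one vertex at a time.
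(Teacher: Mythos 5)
Your proof is correct. Note that the paper itself states this as a \emph{Fact} and gives no proof at all, so there is nothing to compare against; your argument --- the proportional-drop (linear) extension preserves the quadratic form since $\sum_i \frac{\kappa}{\alpha_i}\bigl(\alpha_i(x_a-x_b)\bigr)^2 = \kappa(x_a-x_b)^2$, together with the observation that the linear extension is the energy-minimizing (harmonic) extension on the series chain for the reverse inequality --- is exactly the standard series-resistance argument the authors presumably had in mind, and it correctly handles the zero-mass subtlety in the mass terms. One small remark: you do not actually need a minimizer $y'$ of the Dirichlet problem on $G'$ to exist (a point that requires a word of justification since $M'$ is singular); your chain of inequalities shows $\frac{x^\top L x}{x^\top M x} \leq \frac{y'^\top L' y'}{y'^\top M' y'}$ for \emph{every} feasible $y'$ with positive mass term, so taking the infimum over such $y'$ gives $\lambda(G,S)\leq\lambda(G',S)$ directly.
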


To prove the lower bound, we use the above fact to reduce the Dirichlet problem on a graph to the Dirichlet problem on a path.
\begin{proof}[Proof of lower bound.]
We construct a new graph $G'=(V',E',\mu',\kappa')$ from $G$ as follows.
Let $x$ be a solution to the Dirichlet problem corresponding to $\lambda(G,S)$. Let $l_0<\dots<l_N$ be the distinct values of $x$. Without loss of generality, suppose $l_0 = 0$.
For each edge $(a,b)\in E$ such that $x_a=l_i<l_{i+1}<l_j = x_b$, split $e$ into $j-i$ segments such that in the minimum energy extension of $x$, the new vertices on $e$ take on all intermediate values $l_{i+1},\dots,l_{j-1}$ (this is possible by Fact \ref{fact:cut_edges}).
Let $y$ be the minimum energy extension of $x$.

Let $\tilde v_i = \set{v\in V'\smid y_v = l_i}$, let $\tilde A_k=\set{v\in V' \smid y_v \geq l_k}$.
Let $\tilde\kappa_i = \sum_{u\in \tilde v_i,\, v\in \tilde v_{i-1}} \kappa'_{(u,v)}$ be the conductance between $\tilde v_i$ and $\tilde v_{i-1}$. Let $\tilde\mu_i = \mu'(\tilde v_i)$.
Then applying Theorem \ref{thm:dir_path},
\begin{align*}
\lambda(G,S) &= \lambda(G',S)\\
&= \min_{z\in\RR^N}\set{\frac{\sum_{i=1}^N \tilde\kappa_i(z_i-z_{i-1})^2}{\sum_{i=1}^N\tilde\mu_i z_i^2}\smid z_0 = 0,\, z\neq 0}\\
&\geq \frac{1}{4} \min_{1\leq k\leq N} \frac{1}{\mu'\left(\tilde A_k\right) \sum_{i=1}^k \frac{1}{\tilde\kappa_i}}\\
&\geq \frac{1}{4} \min_{1\leq k\leq N} \frac{1}{\mu'\left(\tilde A_k\right) R'\left(S,\tilde A_k\right)}\\
&\geq \frac{1}{4} \min_{A'\subseteq V'}\set{\frac{R'(S,A')^{-1}}{\mu'(A')}\smid A'\neq \emptyset,\, A'\cap S = \emptyset}.
\end{align*}
Finally, let $A = A'\cap V$. Then $\mu(A) = \mu'(A')$ and $R(S,A)\geq R'(S,A')$. Thus,
\begin{align*}
\lambda(G,S)&\geq \frac{1}{4} \min_{A'\subseteq V'}\set{\frac{R'(S,A')^{-1}}{\mu'(A')}\smid A'\neq \emptyset,\, A'\cap S = \emptyset}\\
&\geq \frac{1}{4} \min_{A\subseteq V}\set{\frac{R(S,A)^{-1}}{\mu(A)}\smid A\neq \emptyset,\, A\cap S = \emptyset}\\
& = \frac{1}{4H}.
\end{align*}
\end{proof}

The following theorem follows as a corollary.
\begin{theorem}
Let $G$ be a vertex and edge weighted connected graph with boundary set $S$, a proper nonempty subset of $V$.
Let $\lambda(G,S)$ be the Dirichlet eigenvalue and let $\Psi(G,S)$ be the Hardy quantity of $G$.
Then
\begin{align*}
\frac{\Psi}{4}\leq\lambda \leq \Psi.
\end{align*}
\end{theorem}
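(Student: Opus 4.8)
The plan is to read this statement off directly from Theorem~\ref{thm:dir_graph}, which has already done all of the substantive work by establishing $\frac{1}{4H}\leq\lambda\leq\frac{1}{H}$ for an arbitrary connected vertex- and edge-weighted graph with boundary set $S$. The remaining task is purely definitional: the Hardy quantity and the \psiquantity\ were defined as reciprocals of one another, $H = \Psi^{-1}$ and hence $\Psi = 1/H$, so any bound phrased in terms of $H$ can be rephrased in terms of $\Psi$.

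Concretely, I would substitute $\Psi = 1/H$ into the two-sided bound supplied by Theorem~\ref{thm:dir_graph}. The lower bound $\frac{1}{4H}\leq\lambda$ becomes $\frac{\Psi}{4}\leq\lambda$, and the upper bound $\lambda\leq\frac{1}{H}$ becomes $\lambda\leq\Psi$, which together are exactly the claimed inequalities $\frac{\Psi}{4}\leq\lambda\leq\Psi$. No inequality is inadvertently reversed, because $H$ and $\Psi$ are both finite and strictly positive: since $G$ is connected and $S$ is a nonempty proper subset of $V$, every effective resistance $R(S,A)$ and every mass $\mu(A)$ over a nonempty $A$ disjoint from $S$ is finite and positive, so $H$ is a maximum of finitely many positive numbers and $0 < H < \infty$.

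Because this is a one-line reformulation, there is no genuine obstacle to overcome; the only thing to guard against is a reciprocal or direction-of-inequality slip when translating between the two normalizations, and the positivity remark above rules that out.
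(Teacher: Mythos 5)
Your proposal is correct and matches the paper exactly: the paper states this theorem with the single remark that it ``follows as a corollary'' of Theorem~\ref{thm:dir_graph}, which is precisely your substitution $\Psi = 1/H$ into the bounds $\frac{1}{4H}\leq\lambda\leq\frac{1}{H}$. Your added observation that $0 < H < \infty$ (so the reciprocal is harmless) is a small but sound extra check that the paper leaves implicit.
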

%!TEX root = paper.tex

\section{The Neumann problem on general graphs}
\label{sec:neu_graph}
Throughout this section, let $G=(V,E,\mu,\kappa)$ be a vertex and edge weighted connected graph.

\subsection{The two-sided Hardy quantity and the \psitwoquantity}

Let $A,B\subseteq V$ be disjoint nonempty sets.
Consider the graph consisting of two vertices $v_A,v_B$ where vertex $v_A$ has mass $\mu(A)$, vertex $v_B$ has mass $\mu(B)$ and the edge $(v_A,v_B)$ has conductance $R(A,B)^{-1}$. Then the Neumann eigenvalue of this two node system is given by $\frac{\mu(A)^{-1} + \mu(B)^{-1}}{R(A,B)}$. We will define the \psitwoquantity\ $\Psi_2$ to be the minimum such quantity and, for historical reasons, we will define the two-sided Hardy quantity to be $H_2=\Psi_2^{-1}$.

\begin{definition}
Define the \textbf{\psitwoquantity} $\Psi_2$ to be
\begin{align*}
\Psi_2 = \min_{A,B\subseteq V}\set{\frac{\mu(A)^{-1} + \mu(B)^{-1}}{R(A,B)} \smid A,B\neq \emptyset,\, A\cap B = \emptyset}.
\end{align*}
\end{definition}
\begin{definition}
Define the \textbf{two-sided Hardy quantity} to be $H_2 = \Psi_2^{-1}$, i.e.
\begin{align*}
H_2 = \max_{A,B\subseteq V}\set{R(S,A)\left(\mu(A)^{-1} + \mu(B)^{-1}\right)^{-1} \smid A,B\neq \emptyset,\, A\cap B = \emptyset}.
\end{align*}
\end{definition}

We note that the isoperimetric constant $\Phi$ of a weighted graph is closely related to $\Psi_2$. Recall
\begin{align*}
\Phi(G) =\min_{A\subset V} \set{\frac{\sum_{e\in E(A,\bar A)} \kappa_e}{\min(\mu(A),\mu(\bar A))}\smid A,\bar A\neq \emptyset}.
\end{align*}
Noting that $\sum_{e\in E(A,\bar A)} c(e) = R(A,\bar A)^{-1}$ and $(\min(\mu(A),\mu(\bar A))^{-1} = \max (\mu(A)^{-1}, \mu(\bar A)^{-1})$, we can rewrite
\begin{align*}
\Phi(G) =\min_{A\subset V} \set{\frac{\max (\mu(A)^{-1}, \mu(\bar A)^{-1})}{R(A,\bar A)}\smid A,\bar A\neq \emptyset}.
\end{align*}
Thus, up to constant factors, $\Phi$ can be thought of as the \psitwoquantity\ where $A$ and $B$ are required to partition the vertices.

\subsection{Bounding the Neumann eigenvalue}

In this section we show how to extend the bounds on the Dirichlet eigenvalue to the Neumann eigenvalue.

We will bound the Neumann eigenvalue by applying Courant-Fischer to a carefully chosen two-dimensional subspace. In particular, we will split our graph into two parts sharing a common boundary. We will then take our two-dimensional subspace to be the linear span of solutions to the Dirichlet problem on either side of this boundary. 

Let $f\in\RR^V$ such that $f$ takes on both positive and negative values. We will write this concisely as $\pm f\notin \RR^V_{\geq 0}$.
We will ``pinch'' the graph at the zero level set of $f$ to create a new graph $G'=(V',E',\mu',\kappa')$:
for every edge $(u,v)\in E$ such that $f_u < 0 < f_v$, insert a new vertex $s$ such that the minimum energy extension of $f$ assigns $f(s) = 0$. Let $\mu'(s) = 0$.

Abusing notation we will also let $f\in\RR^{V'}$ be the minimum energy extension of $f$ to $V'$.
Let $F_0 = \set{v\in V' \smid f_v = 0}$, let $F_{\geq 0} = \set{v\in V'\smid f_v\geq 0}$ and $F_{\leq 0} = \set{v\in V'\smid f_v\leq 0}$.
Similarly define $F_{>0}, F_{<0}$ and note that $G'$ has no edges between $F_{>0}$ and $F_{<0}$.

We have the following lemma regarding the optimal ``pinch.''
\begin{lemma}
\label{lem:graph_pinching}
\begin{align*}
\lambda_2(G) &= \min_{f}\set{\max\left(\lambda(G',F_{\leq 0}),\lambda(G',F_{\geq 0})\right)\smid\pm f\notin \RR^V_{\geq 0}}.
\end{align*}
\end{lemma}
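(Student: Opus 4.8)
The plan is to prove the two inequalities separately, in both directions exploiting the single structural fact that the pinched graph $G'$ has no edge joining $F_{>0}$ to $F_{<0}$, so that the two one-sided Dirichlet problems decouple in both the $L'$ and the $M'$ quadratic forms. Throughout I will use Fact \ref{fact:cut_edges} to move between $G$ and $G'$: inserting zero-mass vertices along edges preserves the Laplacian energy of a minimum-energy extension, so the finite part of the generalized spectrum of $(L',M')$ agrees with that of $(L,M)$; in particular $\lambda_2(G')=\lambda_2(G)$.

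For $\lambda_2(G)\leq\min_f\max(\lambda(G',F_{\leq 0}),\lambda(G',F_{\geq 0}))$, fix an admissible $f$ and form $G'$. Every inserted vertex $s$ satisfies $f_s=0$, hence lies in $F_0$ and in both boundary sets, so the optimizers $g,h$ of $\lambda(G',F_{\leq 0})$ and $\lambda(G',F_{\geq 0})$ are supported on $F_{>0}$ and $F_{<0}$ respectively, which consist only of original (positive-mass) vertices. Because no edge joins $F_{>0}$ to $F_{<0}$, each edge contributes to at most one of $g,h$, giving $g^\top L' h=0$, while disjoint supports give $g^\top M' h=0$. Hence for any $x=ag+bh$ the quotient $\frac{x^\top L' x}{x^\top M' x}$ is a weighted mediant of the two one-sided quotients and is at most $\max(\lambda(G',F_{\leq 0}),\lambda(G',F_{\geq 0}))$. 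To feed $W'=\spann(g,h)$ into the Courant-Fischer characterization of $\lambda_2(G)$, I restrict to the original vertices: the restriction map is injective on $W'$ (both $g,h$ are nonzero on positive-mass vertices with disjoint support), and for each $x\in W'$ replacing the inserted-vertex values by their harmonic values lowers the energy and preserves the mass (Fact \ref{fact:cut_edges}), so the Rayleigh quotient on $\RR^V$ of the restriction is no larger than that of $x$. The image is a genuine two-dimensional subspace of $\RR^V$, so Courant-Fischer gives the bound, and minimizing over $f$ finishes this direction.

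For the reverse inequality I exhibit one good $f$, a Neumann eigenvector: a minimizer with $Lf=\lambda_2 Mf$ and $f^\top M 1=0$. Since $M$ is positive and $f\neq 0$, the constraint forces $f$ to take both signs, so $f$ is admissible. Extending by minimum energy preserves $L'f=\lambda_2 M'f$ on $V'$ (at inserted vertices both sides vanish, as the mass is zero and $f$ is harmonic; at original vertices the net current is unchanged by Fact \ref{fact:cut_edges}). Writing $f^+=f\cdot\indicator{f>0}$, a feasible test vector for $\lambda(G',F_{\leq 0})$, I get $(f^+)^\top L' f^+=(f^+)^\top L' f$ because $f-f^+$ is supported on $F_{<0}$ and there are no edges from $F_{>0}$ to $F_{<0}$; the eigenvalue relation then gives $(f^+)^\top L' f=\lambda_2(f^+)^\top M' f=\lambda_2(f^+)^\top M' f^+$, the last step by diagonality of $M'$ and disjoint supports. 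Thus the Rayleigh quotient of $f^+$ is exactly $\lambda_2$, so $\lambda(G',F_{\leq 0})\leq\lambda_2$, and the symmetric argument with $f\cdot\indicator{f<0}$ gives $\lambda(G',F_{\geq 0})\leq\lambda_2$, whence $\max(\lambda(G',F_{\leq 0}),\lambda(G',F_{\geq 0}))\leq\lambda_2(G)$.

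The main obstacle I expect is the bookkeeping around the zero-mass inserted vertices, since the generalized eigenvalue problem on $G'$ is degenerate and Courant-Fischer must ultimately be invoked on the non-degenerate problem on $G$. The device I would check most carefully is the pointwise restriction inequality in the first part, comparing the Rayleigh quotient of $x\in\RR^{V'}$ with that of its restriction to $\RR^V$ via the harmonic extension of Fact \ref{fact:cut_edges}; this is what lets me avoid decomposing the restricted vectors (whose $L$ cross-terms would be nonzero in $G$). The orthogonality relations $g^\top L' h=g^\top M' h=0$ and $(f^+)^\top L'(f-f^+)=0$ all reduce to the absence of edges between $F_{>0}$ and $F_{<0}$, which I would state and use explicitly.
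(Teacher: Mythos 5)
Your proposal is correct and follows essentially the same route as the paper: for the upper bound you span the two one-sided Dirichlet optimizers, kill the $L'$ and $M'$ cross terms, and invoke Courant--Fischer; for the lower bound you split the Neumann eigenvector at its zero level set ($f^+$ is exactly the paper's vector $y$) and show its Rayleigh quotient equals $\lambda_2$. The only difference is one of bookkeeping: where the paper asserts $\lambda_2(G)=\lambda_2(G')$ and applies Courant--Fischer directly to the degenerate problem on $G'$ (arguing the cross term vanishes via $\supp(L'z)\subseteq F_{\leq 0}$ rather than via the absence of edges between $F_{>0}$ and $F_{<0}$), you restrict the test subspace back to $\RR^V$ before applying Courant--Fischer, which handles the zero-mass inserted vertices a bit more carefully but is not a different argument.
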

\begin{proof}
Let $\mc R$ denote the quantity on the right hand side.

We begin by showing that $\lambda_2(G) \leq \mc R$.
Let $f\in\RR^V$ take on both positive and negative values. Note that $\lambda_2(G)=\lambda_2(G')$. Let $y,z\in\RR^{V'}$ be solutions to the two Dirichlet problems with Dirichlet eigenvalues $\lambda(G',F_{\leq 0})$ and $\lambda(G',F_{\geq 0})$ respectively.
Note that $\supp(L'z)\subseteq F_{\leq 0}$ and that $y\restriction_{F_{\leq 0}} = 0$, thus $y^\top L' z = 0$. 
Applying Courant-Fischer to the subspace generated by $y$ and $z$,
\begin{align*}
\lambda_2(G) &= \lambda_2(G')\\
&\leq \max_{x\in \spann(y,z)} \frac{x^\top L'x}{x^\top M'x}\\
&= \max_{(\alpha,\beta)\neq 0} \frac{\alpha^2 y^\top L' y + \beta^2 z^\top L'z}{\alpha^2 y^{\top}M'y +\beta^2 z^{\top}M'z }\\
&= \max\left(\lambda(G',F_{\leq 0}),\lambda(G',F_{\geq 0})\right).
\end{align*}

Next we show that $\mc R\leq \lambda_2(G)$. We will exhibit a choice of $f$ taking on both positive and negative values such that $\lambda(G',F_{\leq 0}),\lambda(G',F_{\geq 0}) \leq \lambda_2(G)$. This will additionally imply that the minimum is achieved.

Let $x$ be a solution to the Neumann problem of $G$. We will pick $f=x$. Abusing notation, also let $x\in\RR^{V'}$ be the minimum energy extension of $x$ to $V'$.
Note that $x\restriction_{F_0} = 0$.
Let $y,z$ be $x$ with the $F_{\leq 0}$ and $F_{\geq 0}$ coordinates zeroed out respectively. Note that $L'y$ agrees with $L'x=\lambda_2(G)M'x$ on the support of $y$ and that $y$ agrees with $x$ on the support of $y$. Thus $y^\top L' y = \lambda_2(G) y^\top M' x = \lambda_2(G) y^\top M'y$. Then,
\begin{align*}
\lambda(G',F_{\leq 0}) &\leq \frac{y^\top L'y}{y^\top M'y}\\
&= \lambda_2(G).
\end{align*}
Similarly, $\lambda(G',F_{\geq 0}) \leq \lambda_2(G)$. 
\end{proof}

\begin{lemma}
\label{lem:res_sum}
Let $A\subseteq F_{<0}$ and $B\subseteq F_{>0}$. Then,
\begin{align*}
R'(A,F_0) + R'(B,F_0) \leq R'(A,B).
\end{align*}
\end{lemma}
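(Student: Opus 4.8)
The plan is to read the inequality as the statement that resistances in series add, exploiting that $F_0$ is a vertex cut separating $A$ from $B$. Since $A\subseteq F_{<0}$, $B\subseteq F_{>0}$, and $G'$ has no edges between $F_{<0}$ and $F_{>0}$, every path from $A$ to $B$ must pass through $F_0$. I would establish the bound through the dual (flow) characterization of effective resistance rather than the voltage definition, because the cut structure is most transparent for flows.

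Recall Thomson's principle, $R'(A,B) = \min\set{\sum_{e\in E'} j_e^2/\kappa'_e \smid j \text{ a unit flow from } A \text{ to } B}$, the minimum being over antisymmetric edge flows that are conserved away from $A$ and $B$ and carry one net unit from $A$ to $B$. Let $j$ attain this minimum, so its energy equals $R'(A,B)$. Every edge of $G'$ has both endpoints in $F_{\leq 0}$ or both in $F_{\geq 0}$, the only overlap being edges internal to $F_0$. Assigning each $F_0$-internal edge to the negative side produces a genuine partition $E' = E^- \cup E^+$ into edges touching $F_{<0}$ (together with the $F_0$-internal edges) and edges touching $F_{>0}$. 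Letting $j^-$ and $j^+$ be the restrictions of $j$ to $E^-$ and $E^+$, the disjointness gives the exact energy split $\sum_{e\in E'} j_e^2/\kappa'_e = \sum_{e\in E^-} j_e^2/\kappa'_e + \sum_{e\in E^+} j_e^2/\kappa'_e$.

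Next I would verify that $j^-$ is a unit flow from $A$ to $F_0$ and $j^+$ a unit flow from $F_0$ to $B$. Conservation at every vertex of $F_{<0}$ is inherited by $j^-$, since all its incident edges lie in $E^-$; likewise $j^+$ conserves at every vertex of $F_{>0}$. Because the net outflow of $j$ is one unit at $A$ and minus one unit at $B$, conservation of total flux forces $j^-$ to deposit exactly one unit into $F_0$ and $j^+$ to withdraw exactly one unit from $F_0$, which is precisely the claim. Thomson's principle in $G'$ then gives $\sum_{e\in E^-} j_e^2/\kappa'_e \geq R'(A,F_0)$ and $\sum_{e\in E^+} j_e^2/\kappa'_e \geq R'(F_0,B)$, and summing these two bounds yields $R'(A,B) \geq R'(A,F_0) + R'(F_0,B)$.

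The difficulty here is bookkeeping rather than conceptual: the edges internal to $F_0$ belong to both sides, and a naive restriction would double-count their energy and reverse the inequality, so they must be charged to a single side; one must also remember that $F_0$ is a set of terminals, so the conservation check there is at the level of net flux into the whole set. A cleaner alternative sidesteps flows altogether. Using the contraction description of effective resistance noted earlier, contract $F_0$ to a single vertex $v_0$ to form $\tilde G$; then $R_{\tilde G}(A,v_0) = R'(A,F_0)$ and $R_{\tilde G}(v_0,B) = R'(F_0,B)$. In $\tilde G$ the vertex $v_0$ is a genuine cut vertex separating $A$ from $B$, so the elementary series law gives $R_{\tilde G}(A,B) = R_{\tilde G}(A,v_0) + R_{\tilde G}(v_0,B)$, while Rayleigh monotonicity (contraction never increases effective resistance) gives $R_{\tilde G}(A,B) \leq R'(A,B)$; combining the two produces the same inequality.
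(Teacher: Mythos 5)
Your proof is correct, but it runs in the opposite (dual) direction from the paper's. You take the optimal object for the two-terminal problem---the minimum-energy unit flow from $A$ to $B$---and split it along the cut $F_0$ into competitors for the two one-sided problems via Thomson's principle; your bookkeeping (assigning the $F_0$-internal edges to a single side, and checking only net flux into the terminal set $F_0$ rather than vertexwise conservation there) is exactly the care this direction requires, and it goes through because no edge of $G'$ joins $F_{<0}$ to $F_{>0}$. The paper instead works primally: it takes optimal voltage assignments $y$ and $z$ for the two one-sided problems (with $y\restriction_A=-1$, $z\restriction_B=1$, both zero on $F_0$), observes $y^\top L' z = 0$ because their supports meet disjoint edge sets, and combines them with the weight $\alpha = Z/(Y+Z)$ into a feasible test voltage for the $A$--$B$ problem, giving $R'(A,B)^{-1} \leq \alpha^2 Y + (1-\alpha)^2 Z = YZ/(Y+Z) = \left(R'(A,F_0)+R'(B,F_0)\right)^{-1}$. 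The trade-off: the paper's argument uses only the variational definition of effective resistance stated in its preliminaries, whereas your flow argument invokes Thomson's principle for set terminals, which the paper never establishes; on the other hand, your route makes the series intuition explicit and generalizes immediately to a chain of nested cuts. Your second alternative---contract $F_0$ to a cut vertex, apply the series law, then Rayleigh monotonicity under contraction---is also valid and arguably the cleanest, but it likewise leans on two standard facts not developed in the paper.
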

\begin{proof}
Let $y\in\RR^{V'}$ be an assignment of voltages such that $y\restriction_A = -1$, $y\restriction_{F_0}=0$ and $y^\top Ly = R'(A,F_0)^{-1}$. Let $Y=y^\top Ly$

Let $z\in\RR^{V'}$ be an assignment of voltages such that $y\restriction_B = 1$, $z\restriction_{F_0}=0$ and $z^\top Lz = R'(B,F_0)^{-1}$. Let $Z=z^\top Lz$.

Note that $y$ is zero on $F_{\geq 0}$ and $\supp(L'z)\subseteq F_{\geq 0}$, thus $y^\top L' z = 0$.

Let $\alpha= \frac{Z}{Y+Z}$. Note that $\alpha y + (1-\alpha) z$ is an assignment of voltages with a voltage drop of $1$ across $A$ and $B$. Thus
\begin{align*}
\frac{1}{R(A,B)} &\leq (\alpha y + (1-\alpha) z)^\top L' (\alpha y + (1-\alpha) z)\\
&= \alpha^2 Y + (1-\alpha)^2 Z\\
&= \frac{Y^2Z+ YZ^2}{(Y+Z)^2}\\
&= \frac{YZ}{Y+Z}\\
&= \frac{1}{R'(A,F_0) + R'(F_0,B)}.
\end{align*}
Rearranging terms completes the proof.
\end{proof}

\begin{theorem}\label{thm:neu_graph}
Let $G$ be a vertex and edge weighted connected graph.
Let $\lambda_2(G)$ be the Neumann eigenvalue and let $H_2(G)$ be the two-sided Hardy quantity\ of $G$.
Then
\begin{align*}
\frac{1}{4H_2}&\leq \lambda_2 \leq \frac{1}{H_2}.
\end{align*}
\end{theorem}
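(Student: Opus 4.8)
The plan is to prove the two inequalities separately. Since $H_2 = \Psi_2^{-1}$, the claim is equivalent to $\Psi_2/4 \le \lambda_2 \le \Psi_2$. I would obtain the upper bound $\lambda_2 \le \Psi_2$ by writing down one explicit test vector for the Neumann Rayleigh quotient, and the lower bound $\lambda_2 \ge \Psi_2/4$ by reducing to the Dirichlet estimate of Theorem~\ref{thm:dir_graph} through the pinching construction of Lemma~\ref{lem:graph_pinching}.

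For the upper bound, first I would fix disjoint nonempty $A,B \subseteq V$ attaining $\Psi_2$ and let $\phi$ be the minimum-energy voltage with $\phi\restriction_A = 1$ and $\phi\restriction_B = 0$, so that $\phi^\top L\phi = R(A,B)^{-1}$ and $0 \le \phi \le 1$. Since $\phi$ need not be $M$-orthogonal to $\mathbb 1$, I would subtract the constant $t = (\phi^\top M \mathbb 1)/\mu(V)$ and set $x = \phi - t\mathbb 1$, a nonzero vector with $x^\top M \mathbb 1 = 0$. As $L\mathbb 1 = 0$, the shift leaves the numerator unchanged, $x^\top L x = R(A,B)^{-1}$. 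Discarding the nonnegative contributions of vertices outside $A \cup B$ gives $x^\top M x \ge \mu(A)(1-t)^2 + \mu(B)t^2$, and this quadratic in $t$ is at least its minimum value $\mu(A)\mu(B)/(\mu(A)+\mu(B))$ for every $t$. Combining the two estimates yields $\lambda_2 \le x^\top L x/x^\top M x \le R(A,B)^{-1}(\mu(A)^{-1}+\mu(B)^{-1}) = \Psi_2$.

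For the lower bound, I would invoke Lemma~\ref{lem:graph_pinching} to obtain a pinch $f$ with $\lambda_2 = \max(\lambda(G',F_{\le 0}), \lambda(G',F_{\ge 0}))$, and then apply Theorem~\ref{thm:dir_graph} to each side, so $\lambda_2 \ge \tfrac14 \max(\Psi(G',F_{\le 0}), \Psi(G',F_{\ge 0}))$. Let $A^* \subseteq F_{<0}$ and $B^* \subseteq F_{>0}$ attain the two one-sided \psiquantity\ values $\psi_A := \Psi(G',F_{\ge 0})$ and $\psi_B := \Psi(G',F_{\le 0})$. Because $G'$ has no edges between $F_{<0}$ and $F_{>0}$, grounding all of $F_{\ge 0}$ is equivalent to grounding only $F_0$ when the source lies in $F_{<0}$; hence $R'(F_{\ge 0},A^*) = R'(F_0,A^*)$, and symmetrically on the other side. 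This lets me write $\psi_A = 1/(R'(A^*,F_0)\mu'(A^*))$ and $\psi_B = 1/(R'(B^*,F_0)\mu'(B^*))$. The sets $A^*,B^*$ avoid $F_0$ and hence consist of original vertices, so masses and effective resistances agree in $G$ and $G'$; consequently $\Psi_2 \le (\mu'(A^*)^{-1}+\mu'(B^*)^{-1})/R'(A^*,B^*)$. Applying Lemma~\ref{lem:res_sum} in the form $R'(A^*,B^*) \ge R'(A^*,F_0)+R'(B^*,F_0)$ and writing $r_A = R'(A^*,F_0)$, $r_B = R'(B^*,F_0)$ (so that $\mu'(A^*)^{-1} = r_A\psi_A$ and $\mu'(B^*)^{-1} = r_B\psi_B$), the bound on $\Psi_2$ becomes the convex combination $(r_A\psi_A + r_B\psi_B)/(r_A+r_B) \le \max(\psi_A,\psi_B)$. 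Thus $\Psi_2 \le \max(\psi_A,\psi_B)$ and $\lambda_2 \ge \tfrac14\Psi_2$.

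The main obstacle is the lower bound. The two subtle steps there are the reduction of each one-sided Dirichlet quantity to the common boundary $F_0$, which is exactly the hypothesis format needed for Lemma~\ref{lem:res_sum}, and the final recognition that the two-node quantity defining $\Psi_2$ at $A^*,B^*$ is a resistance-weighted average of $\psi_A$ and $\psi_B$, and therefore cannot exceed their maximum. By comparison the upper bound is a direct computation, the only care being the orthogonalizing shift $x = \phi - t\mathbb 1$, which works for the chosen $t$ precisely because the quadratic lower bound on $x^\top M x$ holds uniformly in $t$.
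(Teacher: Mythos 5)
Your proof is correct, and it splits into one half that mirrors the paper and one half that is genuinely different. Your lower bound is essentially the paper's argument: pinch via Lemma~\ref{lem:graph_pinching}, apply Theorem~\ref{thm:dir_graph} to the two Dirichlet problems, pass from the resistances to $F_0$ to $R'(A^*,B^*)$ via Lemma~\ref{lem:res_sum}, and finish with an averaging step. Your ``weighted average is at most the max'' inequality $\frac{r_A\psi_A+r_B\psi_B}{r_A+r_B}\leq\max(\psi_A,\psi_B)$ is the same algebraic fact (the mediant inequality) as the paper's $\max\left(\frac{1+\alpha}{1+\beta},\frac{1+\alpha^{-1}}{1+\beta^{-1}}\right)\geq 1$, just stated more transparently; likewise your observation that $R'(F_{\geq 0},A^*)=R'(F_0,A^*)$, because the pinch severs $F_{<0}$ from $F_{>0}$, is used implicitly but never stated in the paper. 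Where you diverge is the upper bound. The paper also routes $\lambda_2\leq 1/H_2$ through the pinching characterization: given disjoint nonempty $A,B$ it asserts there is an $f$ whose zero level set splits $R'(A,B)$ in the exact proportions $\mu(B):\mu(A)$, equalizing the two Dirichlet terms --- an intermediate-value-type existence claim that the paper does not actually justify. You instead exhibit an explicit Neumann test vector: the minimum-energy voltage $\phi$ with $\phi\restriction_A=1$, $\phi\restriction_B=0$, recentered as $x=\phi-t\cdot 1$ (where $1$ is the all-ones vector and $t=(\phi^\top M1)/\mu(V)$) so that $x^\top M1=0$, together with the uniform-in-$t$ bound $\mu(A)(1-t)^2+\mu(B)t^2\geq\frac{\mu(A)\mu(B)}{\mu(A)+\mu(B)}$. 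This is more elementary and self-contained: it never invokes Lemma~\ref{lem:graph_pinching} for that direction and avoids the unproved equalizing-$f$ claim, so as written your upper bound is arguably more rigorous than the paper's. What the paper's version buys is structural symmetry, with both inequalities flowing through the same pinch-point characterization; what yours buys is a shorter, airtight direct computation.
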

\begin{proof}
For $A,B\subseteq V$, $f\in\RR^V$, let $A<_f B$ if $f_a<f_b$ for all $a\in A,\, b\in B$.

We begin by deriving the upper bound. We express $\lambda_2$ in its ``pinch-point'' characterization (Lemma \ref{lem:graph_pinching}), then apply Theorem \ref{thm:dir_graph} to each Dirichlet problem.
\begin{align*}
\lambda_2(G) &= \min_f\set{\max\left(\lambda(G',F_{\leq 0}),\lambda(G',F_{\geq 0})\right)\smid \pm f\notin \RR^V_{\geq 0}}\\
&\leq \min_{A,B\subseteq V,\,f\in\RR^V} \set{\max\left(\frac{R'(A,F_0)^{-1}}{\mu(A)},\frac{R'(B,F_0)^{-1}}{\mu(B)}\right) \smid \pm f\notin \RR^V_{\geq 0},\, A<_f 0 <_f B}.
\end{align*}
Note that given $A,B\subseteq V$, disjoint and nonempty, we can pick $f$, taking both positive and negative values, such that $R'(A,F_0) = \frac{\mu(B)}{\mu(A)+\mu(B)}R'(A,B)$ and $R'(B,F_0) = \frac{\mu(A)}{\mu(A)+\mu(B)}R'(A,B)$. Picking such an $f$, the two terms in the maximum are equal.
\begin{align*}
\lambda_2(G) &\leq \min_{A,B\subseteq V} \set{\frac{\mu(A)^{-1}+\mu(B)^{-1}}{R'(A,B)} \smid A,B\neq\emptyset,\, A\cap B = \emptyset}\\
&= H_2^{-1}.
\end{align*}

Next we derive the lower bound. Again, we express $\lambda_2$ in its ``pinch-point'' characterization and apply Theorem \ref{thm:dir_graph} to each Dirichlet problem.
\begin{align*}
\lambda_2(G) &= \min_{f}\set{\max\left(\lambda(G',F_{\leq 0}),\lambda(G',F_{\geq 0})\right) \smid \pm f\notin \RR_{\geq 0}^V}\\
&\geq\frac{1}{4}\min_{A,B\subseteq V,\,f\in\RR^V} \set{\max\left(\frac{R'(A,F_0)^{-1}}{\mu(A)},\frac{R'(B,F_0)^{-1}}{\mu(B)}\right) \smid \pm f\notin \RR^V_{\geq 0},\, A<_f 0 <_f B}.
\end{align*}

We pull out $\frac{\mu^{-1}(A) + \mu^{-1}(B)}{R'(A,F_0)+R'(B,F_0)}$ from each  term in the maximum and use the following inequality: if $\alpha,\beta>0$, then $\max\left(\frac{1+\alpha}{1+\beta}, \frac{1+\alpha^{-1}}{1+\beta^{-1}}\right)\geq 1$.
\begin{align*}
\lambda_2(G) &=\frac{1}{4}\min_{A,B\subseteq V,\,f\in\RR^V} \set{\max\left(\frac{R'(A,F_0)^{-1}}{\mu(A)},\frac{R'(B,F_0)^{-1}}{\mu(B)}\right) \smid \pm f\notin \RR^V_{\geq 0},\, A<_f 0 <_f B}\\
&= \frac{1}{4}\min_{A,B\subseteq V,\, f\in\RR^V}\set{\frac{\mu(A)^{-1} + \mu(B)^{-1}}{R'(A,F_0) + R'(B,F_0)}\max\left(\frac{1 + \frac{R'(B,F_0)}{R'(A,F_0)}}{1+ \frac{\mu(A)}{\mu(B)}}, \frac{1 + \frac{R'(A,F_0)}{R'(B,F_0)}}{1+ \frac{\mu(B)}{\mu(A)}}\right)\smid \pm f\notin \RR^V_{\geq 0},\, A<_f 0 <_f B}\\
&\geq \frac{1}{4}\min_{A,B\subseteq V,\, f\in\RR^V}\set{\frac{\mu(A)^{-1} + \mu(B)^{-1}}{R'(A,F_0) + R'(B,F_0)}\smid \pm f\notin \RR^V_{\geq 0},\, A<_f 0 <_f B}
\end{align*}
Finally applying Lemma \ref{lem:res_sum}, we have that $R'(A,F_0)+R'(B,F_0)\leq R'(A,B)=R(A,B)$. Thus,
\begin{align*}
\lambda_2(G) &\geq \frac{1}{4}\min_{A,B\subseteq V}\set{\frac{\mu(A)^{-1} + \mu(B)^{-1}}{R(A,B)}\smid A,B\neq\emptyset,\, A\cap B = \emptyset}\\
&= \frac{1}{4H_2}.
\end{align*}
\end{proof}

The following theorem follows as a corollary.
\begin{theorem}
Let $G$ be a vertex and edge weighted connected graph. Let $\lambda_2(G)$ be the Neumann eigenvalue and let $\Psi_2(G)$ be the \psitwoquantity\ of $G$. Then,
\begin{align*}
\frac{\Psi_2}{4}\leq\lambda_2 \leq \Psi_2.
\end{align*}
\end{theorem}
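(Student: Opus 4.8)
The plan is to observe that this statement is an immediate reformulation of Theorem~\ref{thm:neu_graph} under the definitional identity $H_2 = \Psi_2^{-1}$, exactly mirroring the way the final Dirichlet corollary was obtained from Theorem~\ref{thm:dir_graph}. Theorem~\ref{thm:neu_graph} already establishes
\begin{align*}
\frac{1}{4H_2} \leq \lambda_2 \leq \frac{1}{H_2},
\end{align*}
so the only task is to rewrite each side in terms of $\Psi_2$.

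First I would recall that, by definition, $\Psi_2 = H_2^{-1}$, equivalently $H_2 = \Psi_2^{-1}$. Substituting $H_2 = \Psi_2^{-1}$ into the upper bound gives $\lambda_2 \leq 1/H_2 = \Psi_2$, and substituting into the lower bound gives $\lambda_2 \geq 1/(4H_2) = \Psi_2/4$. Combining the two yields
\begin{align*}
\frac{\Psi_2}{4} \leq \lambda_2 \leq \Psi_2,
\end{align*}
which is the claimed bound.

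There is essentially no obstacle here: the entire analytic content --- the pinch-point characterization of $\lambda_2$ (Lemma~\ref{lem:graph_pinching}), the resistance-sum inequality (Lemma~\ref{lem:res_sum}), and the reduction to the Dirichlet bounds of Theorem~\ref{thm:dir_graph} --- is already carried out in the proof of Theorem~\ref{thm:neu_graph}. The only point worth verifying is that the defining expressions for $\Psi_2$ and $H_2$ are genuine reciprocals, which is immediate: $\Psi_2$ is the minimum of $\tfrac{\mu(A)^{-1}+\mu(B)^{-1}}{R(A,B)}$ and $H_2$ the maximum of its reciprocal over the same family of disjoint nonempty pairs $(A,B)$, so $\Psi_2 = H_2^{-1}$ holds on the nose.
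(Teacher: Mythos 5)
Your proposal is correct and matches the paper exactly: the paper states this theorem "follows as a corollary" of Theorem~\ref{thm:neu_graph}, with the translation being precisely the definitional identity $H_2 = \Psi_2^{-1}$ that you invoke. Nothing more is needed, since all the analytic work is already contained in the proof of Theorem~\ref{thm:neu_graph}.
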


\section*{Acknowledgements}
We would like to thank Timothy Chu for many helpful discussions.

\printbibliography
\end{document}